\definecolor{green}{rgb}{0,0.5,0} %
    \let\Cref\crtCref
    \let\cref\crtcref
\let\op\operatornamewithlimits
\let\eps\varepsilon
\let\mc\mathcal
\newcommand{\R}{\mathbb R}
\DeclareMathOperator*{\E}{\mathbb E}
\newtheorem{theorem}[equation]{Theorem}
\newtheorem{lemma}[equation]{Lemma}
\newtheorem{proposition}[equation]{Proposition}
\newtheorem{corollary}[equation]{Corollary}
\newtheorem{question}[equation]{Question}
\theoremstyle{definition}
\newtheorem{definition}[equation]{Definition}
\numberwithin{equation}{section}
\let\c@algorithm\relax
\newaliascnt{algorithm}{equation}
\let\citet\cite
\setlist[enumerate,1]{label={(\arabic*)}}
\title{Small Nash Equilibrium Certificates in Very Large Games}
\author{%
Brian Hu Zhang \\
Computer Science Department\\
Carnegie Mellon University\\
\texttt{bhzhang@cs.cmu.edu} \\
\And
Tuomas Sandholm \\
Computer Science Department, CMU\\
Strategic Machine, Inc. \\
Strategy Robot, Inc. \\
Optimized Markets, Inc. \\
\texttt{sandholm@cs.cmu.edu} \\
}
\begin{document}

\maketitle

\begin{abstract}
	In many game settings, the game is not explicitly given but is only accessible by playing it. While there have  been impressive demonstrations in such settings, prior techniques have not offered safety guarantees, that is, guarantees on the game-theoretic exploitability of the computed strategies. In this paper we introduce an approach that shows that it is possible to provide exploitability guarantees in such settings without ever exploring the entire game. We introduce a notion of a {\em certificate} of an extensive-form approximate Nash equilibrium. For verifying a certificate, we give an algorithm that runs in time linear in the size of the certificate rather than the size of the whole game. In zero-sum games, we further show that an optimal certificate---given the exploration so far---can be computed with any standard game-solving algorithm (e.g., using a linear program or counterfactual regret minimization). However, unlike in the cases of normal form or perfect information, we show that certain families of extensive-form games do not have small approximate certificates, even after making extremely nice assumptions on the structure of the game. Despite this difficulty, we find experimentally that very small certificates, even exact ones, often exist in large and even in infinite games. Overall, our approach enables one to try one's favorite exploration strategies while offering exploitability guarantees, thereby decoupling the exploration strategy from the equilibrium-finding process.

\end{abstract}

\section{Introduction}
Recent years have witnessed AI breakthroughs in games such as poker~\cite{Bowling15:Heads,Moravvcik17:DeepStack,Brown17:Superhuman,Brown19:Superhuman} where the rules are given. In many important applications---such as many war games and finance simulations---the rules are only given via \textit{black-box} access, that is, via playing the game~\cite{Wellman06:Methods,Lanctot17:Unified}, and one can try to construct good strategies by self play. In such settings, deep reinforcement learning techniques are typically used today~\cite{Heinrich16:Deep,Silver16:Mastering,Lanctot17:Unified,Srinivasan18:Actor,Vinyals19:Grandmaster,Berner19:Dota}. However, such methods lack the guarantee of low (or zero) \textit{exploitability} that game-theoretic solving techniques offer. 

Prior to our paper, to compute exploitability of a strategy, one needed to compute the other player's best response to it, which relies on the game being known. Sampling approaches to equilibrium finding have been suggested, but their regret guarantees are vacuous unless the algorithms touch at least as many information sets as there are in the game~\cite{Lanctot09:Monte,Srinivasan18:Actor,Zhou20:Posterior}. 
A recent PAC-learning algorithm has logarithmic sample complexity for \textit{pure} maxmin strategies in \textit{normal-form} games; it extends to some infinite games, but not effectively to mixed strategies in extensive-form games~\citet{Marchesi20:Learning}. 

{\em Game abstraction} is commonly used to reduce the size of a game tree prior to solving~\citet{Billings03:Approximating,Gilpin06:Competitive,Brown15:Simultaneous,Cermak17:An}. Practical abstraction techniques were fundamental to achieving superhuman performance in no-limit Texas hold'em poker in the {\em Libratus}~\cite{Brown17:Superhuman} and {\em Pluribus}~\cite{Brown19:Superhuman} agents. However, these techniques do not have exploitability guarantees. 
There has been recent work on abstraction algorithms with exploitability guarantees for specific settings~\cite{Sandholm12:Lossy,Basilico11:Automated} and for general extensive-form games (e.g.,~\cite{Kroer14:Extensive,Kroer18:Unified}), but these are not scalable for large games such as no-limit Texas hold'em, and the guarantees depend on the difference between the abstracted game and the real game being known.

We introduce an approach that can provide exploitability guarantees (even zero exploitability) in black-box games without ever exploring the entire game tree. We introduce a notion of certificate that is often much smaller than the  full game. 
We show that a certificate can be verified in time linear in the size of the certificate, without expanding the remainder of the game tree. 
For zero-sum games, we give an algorithm that computes an optimal certificate given the current set of explored nodes using any zero-sum game solver as a subroutine. 
Leveraging prior results, we show that perfect-information~\cite{Knuth75:Analysis} and normal-form~\cite{Lipton03:Playing} games have short certificates. 
We prove that extensive-form games do not always have such, but under a certain informational assumption they do. 
We also show that it is NP-hard to approximate to within a logarithmic factor the smallest certificate of a game, even in the zero-sum setting, and give an exponential lower bound for the time complexity of solving a black-box game as a function of the size of its smallest certificate. 
Despite these hardness results, we give a game-solving algorithm that expands nodes incrementally until a certificate is found. It often terminates while only exploring a small fraction of the tree, and works even when the game tree is infinite and payoffs may be unbounded. Our experiments show that large and even infinite games can be solved exactly while expanding only a 
small fraction of the game tree.

\section{Preliminaries}\label{s:prelims}
We study %
{\em extensive-form games}, hereafter simply {\em games}. An extensive-form game consists of the following: 
\begin{enumerate}
    \item a set of players $\mc P$, usually identified with positive integers $1, 2, \dots n$. {\em Nature}, a.k.a. {\em chance}, will be referred to as player 0. For a given player $i$, we will often use $-i$ to denote all players except $i$ and nature. 
    \item a finite tree $H$ of {\em histories}, rooted at some {\em initial state} $\emptyset \in H$. The set of leaves, or {\em terminal states}, in $H$ will be denoted $Z$. The edges connecting any node $h \in H$ to its children are labeled with {\em actions}.
    \item a map $P : H \to \mc P \cup \{0\}$, where $P(h)$ is the player who acts at node $h$ (possibly nature).
    \item for each player $i$, a {\em utility function} $u_i : Z \to \R$. 
    \item for each player $i$, a partition of player $i$'s decision points, i.e., $P^{-1}(i)$, into \textit{information sets}. In each information set $I$, every pair of nodes $h, h' \in I$ must have the same set of actions.
    \item for each node $h$ at which nature acts, a distribution $\sigma_0(\cdot |h)$ over the actions available to nature at node $h$.
\end{enumerate} 

We will use $(G, u)$, or simply $G$ when the utility function is clear, to denote a game. $G$ contains the tree and information set structure, and $u = (u_1, \dots, u_n)$ is the profile of utility functions. 
For any history $h \in H$ and any player $i \in \mc P$, the {\em sequence} of player $i$ at node $h$ is the sequence of information sets observed and actions taken by player $i$ on the path from the root node to $h$. In this paper, all games are assumed to have perfect recall.

A {\em behavior strategy} (hereafter simply {\em strategy}) $\sigma_i$ for player $i$ is, for each information set $I \in J_i$ at which player $i$ acts, a distribution $\sigma_i(\cdot | I)$ over the actions available at that infoset. When an agent reaches information set $I$, it chooses action $a$ with probability $\sigma_i(a | I)$. 

A collection $\sigma = (\sigma_1, \dots, \sigma_n)$ of behavior strategies, one for each player $i \in \mc P$, is a {\em strategy profile}. The {\em reach probability} $\sigma_i(h)$ is the probability that node $h$ will be reached, assuming that player $i$ plays according to strategy $\sigma_i$, and all other players (including nature) always choose actions leading to $h$ when possible. Analogously, we define $\sigma(h) = \prod_{i \in \mc P \cup \{0\}} \sigma_i(h)$ to be the probability that $h$ is reached under strategy profile $\sigma$. This definition naturally extends to sets of nodes or to sequences by summing the reach probabilities of all relevant nodes. A strategy profile induces a distribution over the terminal nodes of the game. The {\em value} of a strategy profile $\sigma$ for player $i$ is $u_i(\sigma) := \E_{z \sim \sigma} u_i(z)$.

The {\em best response value} $u^*_i(\sigma_{-i})$ for player $i$ against an opponent strategy $\sigma_{-i}$ is the largest achievable value; i.e. in a two-player game, $u^*_i(\sigma_{-i}) = \max_{\sigma_i} u_i(\sigma_i, \sigma_{-i})$. A strategy $\sigma_i$ is an $\eps$-{\em  best response} to opponent strategy $\sigma_{-i}$ if $u_i(\sigma_i, \sigma_{-i}) \ge u^*_i(\sigma_{-i}) - \eps$. 

A strategy profile $\sigma$ is an $\eps$-{\em Nash equilibrium} (NE) if all players are playing $\eps$-best responses. {\em Best responses} and {\em Nash equilibria} are respectively $0$-best responses and $0$-Nash equilibria. 

\section{\texorpdfstring{$\eps$}{eps}-Nash certificates via pseudogames}
We are interested in finding small {\em certificates} of exact and approximate Nash equilibria. We introduce a construct that we call a {\em pseudogame}, which can be used to build small certificates of equilibria. 
\begin{definition}
	A {\em pseudogame} $\tilde G = (\tilde G, \alpha, \beta)$ is a game in which some terminal nodes do not have specified utility but rather have only lower and upper bounds on utilities. Formally, for each player $i$, instead of the standard utility function $u_i :  Z \to \R$, there are lower and upper bound functions $\alpha_i : Z \to \R$ and $\beta_i : Z \to \R$ indicating lower and upper bounds respectively on the utility of a node. We demand $\alpha_i(z) \le \beta_i(z)$ for every $i$ and $z$. We call a node {\em pseudoterminal} if $\alpha_i(z) < \beta_i(z)$ for some $i$, and use {\em terminal node} to refer to any leaf in a pseudogame.
\end{definition}

\begin{definition}\label{def:pseudonash}
	An $\eps$-Nash equilibrium of a pseudogame $(\tilde G, \alpha, \beta)$ is a strategy profile $\sigma$ for which, for every player $i$, we have $\beta_i^*(\sigma_{-i}) - \alpha_i(\sigma) \le \eps$.
\end{definition}
\begin{definition}\label{def:trunk}
	A pseudogame $(\tilde G, \alpha, \beta)$ is a {\em trunk} of a game $(G, u)$ if: %
	\begin{enumerate}
	    \item $\tilde G$ can be created by collapsing some internal nodes of $G$ into terminal nodes (and removing them from information sets they are contained in), and 
	    \item if $h$ is a pseudoterminal node of $\tilde G$, and $z$ is a terminal node of $G$ that is a descendant of $h$, then $\alpha_i(h) \le u_i(z) \le \beta_i(h)$ for every $i$. That is, the bounds $\alpha$ and $\beta$ are correct.
	\end{enumerate}
\end{definition}
It is possible for information sets of a game $G$ to be partially or totally removed in a trunk game. 
\begin{definition}
	An $\eps$-{\em certificate} for a game $G$ is a pair $(\tilde G, \sigma)$, where $\tilde G$ is a trunk of $G$ and $\sigma$ is an $\eps$-Nash equilibrium of $\tilde G$.
\end{definition}
Importantly, the definition of a certificate is independent of the original game $G$; that is, given $(\tilde G, \sigma^*)$, $\eps$ can be computed without knowing the remainder of the game tree of $G$: by computing the best response for each player in their optimistic game, it can be done in time linear in the size of $\tilde G$. 

The proposition below shows that our definition of certificate is reasonable. Proofs are in the appendix.
\begin{proposition}\label{prop:reasonable}
	Let $(\tilde G, \sigma)$ be an $\eps$-certificate for game $G$. Then any strategy profile in $G$ created by playing according to $\sigma$ in any information set appearing in $\tilde G$ and arbitrarily at information sets not appearing in $\tilde G$ is an $\eps$-NE in $G$.
\end{proposition}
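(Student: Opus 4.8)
The plan is to fix a player $i$ and show directly that $u_i^*(\hat\sigma_{-i}) - u_i(\hat\sigma) \le \eps$ in $G$, where $\hat\sigma$ is an arbitrary profile of the kind described. The two ingredients I need are (i) $u_i(\hat\sigma) \ge \alpha_i(\sigma)$ and (ii) $u_i^*(\hat\sigma_{-i}) \le \beta_i^*(\sigma_{-i})$; combining these with the defining inequality $\beta_i^*(\sigma_{-i}) - \alpha_i(\sigma) \le \eps$ of an $\eps$-Nash equilibrium of a pseudogame (Definition~\ref{def:pseudonash}) then yields $u_i^*(\hat\sigma_{-i}) - u_i(\hat\sigma) \le \beta_i^*(\sigma_{-i}) - \alpha_i(\sigma) \le \eps$, so $\hat\sigma$ is an $\eps$-NE of $G$.

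The key structural fact, which I would establish first, is that the terminal nodes of the trunk $\tilde G$ form a ``cut'' of the tree of $G$: every terminal node $z$ of $G$ has exactly one weak ancestor that is a leaf of $\tilde G$, and the part of $G$ lying weakly above this cut is exactly $\tilde G$ as a tree, with the same information-set structure on decision nodes except that some infosets have lost nodes. Consequently, for any leaf $h$ of $\tilde G$, every information set of $G$ on the root-to-$h$ path appears in $\tilde G$, so by construction $\hat\sigma$ plays there exactly the (lifted) distribution that $\sigma$ plays; since nature's behavior above the cut is also unchanged, $\hat\sigma$ in $G$ and $\sigma$ in $\tilde G$ induce the same reach probability on each leaf $h$ of $\tilde G$. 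This is the one spot requiring care: I must check that ``playing according to $\sigma$'' is well defined, i.e.\ that a $\tilde G$-infoset is a subset of a $G$-infoset with the same available actions (true because collapsing only deletes nodes and preserves action labels), and that behavior strategies, being constant across the nodes of an infoset, descend from $G$ to $\tilde G$ and lift back cleanly.

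Granting the cut picture, (i) is short: write $u_i(\hat\sigma) = \sum_{h}\sum_{z \succeq h} \hat\sigma(z)\, u_i(z)$, the outer sum over leaves $h$ of $\tilde G$ and the inner over terminals $z$ of $G$ below $h$; the inner reach probabilities sum to the reach probability of $h$, and by trunk correctness (Definition~\ref{def:trunk}(2)) $u_i(z) \ge \alpha_i(h)$ for every pseudoterminal $h$ (with equality $u_i(z) = \alpha_i(h)$ when $h = z$ is an original leaf), so the sum is at least $\sum_h \hat\sigma(h)\,\alpha_i(h) = \alpha_i(\sigma)$. Note this uses only the lower bound and is insensitive to how $\hat\sigma$ behaves below the cut, which is why arbitrary play there is harmless.

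For (ii) I would take a best response $\rho_i$ to $\hat\sigma_{-i}$ in $G$ and project it to a strategy $\tilde\rho_i$ in $\tilde G$ by setting $\tilde\rho_i(\cdot\mid\tilde I) = \rho_i(\cdot\mid I)$ for the $G$-infoset $I \supseteq \tilde I$ (this is where constancy of behavior strategies on infosets is used). By the cut observation the profiles $(\rho_i,\hat\sigma_{-i})$ in $G$ and $(\tilde\rho_i,\sigma_{-i})$ in $\tilde G$ put the same probability on each leaf $h$ of $\tilde G$, so $u_i^*(\hat\sigma_{-i}) = u_i(\rho_i,\hat\sigma_{-i}) = \sum_h\sum_{z\succeq h}(\text{reach of }z)\,u_i(z) \le \sum_h (\text{reach of }h)\,\beta_i(h) = \beta_i(\tilde\rho_i,\sigma_{-i}) \le \beta_i^*(\sigma_{-i})$, using the upper-bound half of trunk correctness and then maximality of $\beta_i^*$. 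I expect the only real obstacle to be bookkeeping: stating the cut and the infoset projection/lifting precisely enough that the two reach-probability equalities are manifestly justified; once that scaffolding is set up, (i) and (ii) are each a one-line computation.
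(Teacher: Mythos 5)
Your proposal is correct and takes essentially the same approach as the paper: the paper's entire proof is the single chain $u^*_i(\hat\sigma_{-i}) - u_i(\hat\sigma) \le \beta^*_i(\sigma_{-i}) - \alpha_i(\sigma) \le \eps$, and your items (i) and (ii) are precisely the two component inequalities of that chain, which the paper leaves implicit and you justify in detail via the cut/reach-probability argument.
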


\section{Do small certificates exist?}
In this section, we study when games have small $\eps$-certificates. Our general goal will be to find certificates of size $O(N^c \op{poly}(1/\eps))$ for some universal constant $c < 1$, where $N$ is the number of nodes. If a game has a small certificate, there is hope of finding such a certificate quickly, and thus being able to find and verify an (approximate or exact) Nash equilibrium while exploring only a small part of the game. 
We start by giving a connection between sparse equilibria and small certificates, which we will use later in this section. 

\begin{proposition}[Sparse equilibria imply small certificates]\label{prop:eq-cert}
	Let $\sigma$ be an $\eps$-NE of a game $G$, and let $\tilde G$ be the smallest trunk of game $G$ containing every node $h$ for which $\sigma_{-i}(h) > 0$ for any player $i$. Then $(\tilde G, \sigma)$ is an $\eps$-certificate of $G$.
\end{proposition}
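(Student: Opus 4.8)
The plan is to show that the trunk $\tilde G$ keeps exactly the part of $G$ that the opponents can reach under $\sigma$, so that moving from $G$ to $\tilde G$ leaves both $\sigma$'s value and every best-response value unchanged; the certificate property then follows immediately from $\sigma$ being an $\eps$-NE of $G$. First I would describe $\tilde G$ concretely. Let $S$ be the set of histories $h$ with $\sigma_{-i}(h) > 0$ for some player $i$. Reach probabilities are non-increasing from root to leaf, so $S$ is closed under taking ancestors; and every internal node of $G$ lying in $S$ has a child in $S$, since the reach probability that certifies $h \in S$ is inherited by at least one child no matter who acts at $h$. Hence the smallest trunk containing $S$ cannot collapse any internal-in-$G$ node of $S$: its node set is $S$ together with all children of those nodes, its leaves are the terminals of $G$ that lie in $S$ plus the children $ha \notin S$ of internal nodes $h \in S$, and only the latter can be pseudoterminal. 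I would equip each such pseudoterminal $z$ with the tightest valid bounds $\alpha_i(z) = \min u_i(z')$ and $\beta_i(z) = \max u_i(z')$ over terminals $z'$ of $G$ descending from $z$, which makes $\tilde G$ a trunk of $G$ by construction.

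Next I would record the observation that makes pseudoterminals harmless: if $h \notin S$, then by definition $\sigma_{-i}(h) = 0$ for every player $i$, and hence $\sigma(h) = 0$ too; in particular this holds at every pseudoterminal of $\tilde G$. Since every uncollapsed node of $\tilde G$ keeps all of its $G$-actions, $\sigma$ restricts to a legal strategy profile of $\tilde G$ whose induced reach probabilities agree with those of $G$ on the nodes of $\tilde G$. From this I would derive $\alpha_i(\sigma) = u_i(\sigma)$: the pseudoterminal leaves contribute $0$ to $\alpha_i(\sigma)$ because $\sigma$ never reaches them, every other leaf of $\tilde G$ is a genuine terminal of $G$ on which $\alpha_i$ coincides with $u_i$, and every terminal of $G$ that $\sigma$ reaches with positive probability lies in $S$ and is therefore such a leaf, so the two expectations defining $\alpha_i(\sigma)$ and $u_i(\sigma)$ are literally the same sum.

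Then I would show $\beta_i^*(\sigma_{-i}) \le u_i^*(\sigma_{-i})$; equality in fact holds, but only this direction is needed. Fix a strategy $\tau_i$ for player $i$ in $\tilde G$. Because $\sigma_{-i}$ assigns probability $0$ to every pseudoterminal of $\tilde G$, those leaves drop out of the optimistic value of $(\tau_i, \sigma_{-i})$, which thus equals the $u_i$-expectation over the genuine-terminal leaves of $\tilde G$. Extend $\tau_i$ to an arbitrary strategy $\hat\tau_i$ of $G$: a terminal $z$ with $\sigma_{-i}(z) = 0$ contributes $0$ to both $u_i^G(\hat\tau_i, \sigma_{-i})$ and that optimistic value, while for a terminal $z$ with $\sigma_{-i}(z) > 0$ every decision node of player $i$ above $z$ has positive $\sigma_{-i}$-reach, hence lies in $S$ and survives uncollapsed with all its actions, so $\hat\tau_i$ reproduces $\tau_i$'s play along the path to $z$. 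Therefore the optimistic value of $\tau_i$ equals $u_i^G(\hat\tau_i, \sigma_{-i})$, and maximizing over $\tau_i$ yields $\beta_i^*(\sigma_{-i}) \le u_i^*(\sigma_{-i})$. Combining the three facts, $\beta_i^*(\sigma_{-i}) - \alpha_i(\sigma) \le u_i^*(\sigma_{-i}) - u_i(\sigma) \le \eps$ for every player $i$ since $\sigma$ is an $\eps$-NE of $G$, so $\sigma$ is an $\eps$-NE of the pseudogame $\tilde G$ and $(\tilde G, \sigma)$ is an $\eps$-certificate.

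The only delicate point is the reach-probability bookkeeping in the last two paragraphs: verifying that restricting $\sigma$ to $\tilde G$, and extending a $\tilde G$-strategy back to $G$, genuinely preserve the probabilities of the terminals that matter. This rests entirely on the two structural facts that $\tilde G$ retains every node the opponents of some player can reach and that no such node loses an action; the remaining checks — that $\tilde G$ literally meets the definition of a trunk, and that the restriction of $\sigma$ is a legal profile — are routine.
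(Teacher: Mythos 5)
Your proposal is correct and follows essentially the same route as the paper's proof: the key observation in both is that every pseudoterminal node of $\tilde G$ has $\sigma_{-i}(h)=0$ for all $i$, hence is unreachable under any unilateral deviation, which yields $\alpha_i(\sigma)=u_i(\sigma)$ and $\beta_i^*(\sigma_{-i})\le u_i^*(\sigma_{-i})$ and then the chain $\beta_i^*(\sigma_{-i})-\alpha_i(\sigma)\le u_i^*(\sigma_{-i})-u_i(\sigma)\le\eps$. The paper states this in two lines; you have simply made the trunk-structure and reach-probability bookkeeping explicit.
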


\subsection{Perfect-information zero-sum games have small certificates, via alpha-beta search}\label{s:alphabeta}
In two-player perfect-information zero-sum games, under certain assumptions, small certificates exist. Specifically, assume that

\begin{enumerate}
    \item  there is no randomness (no nature nodes),
    \item all nodes have uniform branching factor $b = O(1)$,
    \item moves alternate; i.e., a player-1 decision node is always followed by a player-2 decision node, and 
    \item the tree has uniform depth $d$.
\end{enumerate}
In this case, the game has $N = b^d$ terminal nodes. Alpha-beta search with an optimal heuristic will search only $O(b^{d/2}) = O(\sqrt{N})$ tree nodes before arriving at a provably optimal strategy~\cite{Knuth75:Analysis}. Thus, the portion of the game tree consisting of nodes touched by alpha-beta search contains $O(\sqrt{N})$ nodes, and constitutes a $0$-certificate.

\subsection{Normal-form games have small certificates, via sparse equilibria}
A {\em normal-form game} is a game in which each player has only a single information set. A two-player normal-form game with $a_1$ player-1 moves and $a_2$ player-2 moves (hence $N = a_1 a_2$ terminal nodes) can thus be expressed as a pair of {\em utility matrices} $A,B \in \R^{a_1 \times a_2}$. In two-player normal-form games, for every $\eps$, there is an $\eps$-NE in which each player $i$ randomizes over $O(\log(a_{-i})/\eps^2)$ pure strategies~\citet{Lipton03:Playing}. Let $\sigma^*$ be such an $\eps$-Nash equilibrium, and let $S_i \subseteq [a_i]$ be the support of $\sigma_i$.

Consider the following extensive-form pseudogame: First, P1 chooses her strategy $s_1 \in [a_1]$. Then, P2 decides whether or not she should play a node from $S_2$. If P2 decides not to play from $S_2$, and P1 has not played an action in $S_1$, the pseudogame terminates immediately in a pseudoterminal node with trivial payoff bounds, i.e., $(-\infty, \infty)$. Otherwise, P2 chooses some strategy $s_2 \in S_2$ to play, and the proper payoffs are given out. This pseudogame has $O(a_1 \abs{S_2} + a_2 \abs{S_1})$ terminal nodes, and by \Cref{prop:eq-cert}, the profile $\sigma^*$ is an $\eps$-NE in it. Thus, when $a_1 = \Theta(a_2)$, an $a_1 \times a_2$ normal-form game has an $\eps$-certificate of size $O(\sqrt{N} \log(N)/\eps^2)$.

Unlike in the case of perfect-information zero-sum games, normal-form games in general do not have small {\em exact} certificates: an exact certificate must necessarily include all strategies played in some equilibrium, and there are normal-form games for which the only equilibria are fully mixed.

\subsection{Extensive-form games with low information have small certificates}
This can be generalized to extensive-form games where players do not learn too much information.
\begin{theorem}\label{prop:low-info}
	Let $G$ be a two-player game with $N$ nodes and bounded payoffs, and let $D$ be the maximum number of terminal sequences in the support of any pure strategy for either player. Then $G$ has an $\eps$-Nash equilibrium in which both players mix among $O((D^2/\eps^2) \log N)$ pure strategies.
\end{theorem}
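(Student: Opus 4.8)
The plan is a sampling argument in the spirit of Lipton--Markakis--Mehta~\citet{Lipton03:Playing}; the extensive-form structure will enter only through a decomposition of payoffs into per-terminal-sequence pieces. After rescaling the bounded payoffs so that $u_i : Z \to [0,1]$ (this only affects the hidden constant), fix an exact Nash equilibrium $\sigma^*$ of $G$; by Kuhn's theorem (applicable since $G$ has perfect recall) each $\sigma^*_i$ is realization-equivalent to a distribution $\mu^*_i$ over pure strategies. For a value $k = \Theta((D^2/\eps^2)\log N)$ to be pinned down later, I would draw, independently for each player $i$, i.i.d.\ pure strategies $\Pi^{(i)}_1, \dots, \Pi^{(i)}_k \sim \mu^*_i$ and set $\hat\sigma_i := \frac1k \sum_{t \le k} \Pi^{(i)}_t$. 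The goal is then to show that $\hat\sigma$ is an $\eps$-NE with positive probability; since it mixes over at most $k$ pure strategies per player, this proves the theorem.

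The structural lemma I would isolate first is a decomposition of best-response values. For a terminal sequence $q$ of player $1$ (meaning $q = \mathrm{seq}_1(z)$ for some $z \in Z$; there are at most $N$ of these) and any strategy $\sigma_2$, set $u_1(q, \sigma_2) := \sum_{z : \mathrm{seq}_1(z) = q} \sigma_2(\mathrm{seq}_2(z))\, \sigma_0(z)\, u_1(z)$. Then (i) $u_1(q, \sigma_2) \in [0,1]$, because the coefficients $\sigma_2(\mathrm{seq}_2(z))\sigma_0(z)$ are the probabilities of reaching the distinct terminals $z$ with $\mathrm{seq}_1(z) = q$ when player $1$ commits to the actions of $q$, hence sum to at most $1$; and (ii) for every pure strategy $\pi_1$, $u_1(\pi_1, \sigma_2) = \sum_{q \in T_1(\pi_1)} u_1(q, \sigma_2)$, where $T_1(\pi_1)$ is the set of player-$1$ terminal sequences reachable under $\pi_1$, a set of size at most $D$ by hypothesis. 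Fact (ii) holds because $z \mapsto \mathrm{seq}_1(z)$ partitions the terminals reached under $\pi_1$ into classes indexed by $T_1(\pi_1)$, and a class belongs to $T_1(\pi_1)$ exactly when every one of its terminals is reached under $\pi_1$, so each class contributes $u_1(q,\sigma_2)$ with no dependence on $\pi_1$ beyond which classes appear. The symmetric statements hold for player $2$.

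Finally I would combine three events, each obtained from Hoeffding's inequality for an average of i.i.d.\ $[0,1]$-valued terms. (a) $|u_i(\hat\sigma) - u_i(\sigma^*)| \le \eps/3$ for each $i$: condition on $\hat\sigma_i$ and apply Hoeffding to $\tfrac1k\sum_t u_i(\hat\sigma_i, \Pi^{(-i)}_t)$ about $u_i(\hat\sigma_i, \sigma^*_{-i})$, then to $\tfrac1k\sum_s u_i(\Pi^{(i)}_s, \sigma^*_{-i})$ about $u_i(\sigma^*)$ --- no union bound is needed here. (b) $u_i(q, \hat\sigma_{-i}) \le u_i(q, \sigma^*_{-i}) + \eps/(3D)$ simultaneously for all $\le N$ terminal sequences $q$ of player $i$: Hoeffding for each fixed $q$ (the average $\tfrac1k\sum_t u_i(q, \Pi^{(-i)}_t)$ has mean $u_i(q,\sigma^*_{-i})$), union-bounded over $q$ --- this is the one place where the doubly-exponentially large set of pure strategies is replaced by the $\le N$ terminal sequences, and it is what forces $k = \Theta((D^2/\eps^2)\log N)$. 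On event (b), summing fact (ii) over the $\le D$ sequences in $T_i(\pi_i)$ gives $u_i(\pi_i, \hat\sigma_{-i}) \le u_i(\pi_i, \sigma^*_{-i}) + \eps/3 \le u^*_i(\sigma^*_{-i}) + \eps/3$ for every pure $\pi_i$, hence $u^*_i(\hat\sigma_{-i}) \le u^*_i(\sigma^*_{-i}) + \eps/3$; combining with (a) and the identity $u^*_i(\sigma^*_{-i}) = u_i(\sigma^*)$ (as $\sigma^*$ is a Nash equilibrium) yields $u^*_i(\hat\sigma_{-i}) - u_i(\hat\sigma) \le 2\eps/3 < \eps$, i.e.\ $\hat\sigma$ is an $\eps$-NE. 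Choosing $k$ so the finitely many failure probabilities sum to less than $1$ completes the argument. The one genuinely delicate point, and the crux of the whole proof, is fact (ii) together with the observation that $u_i(q, \sigma_{-i})$ depends on a pure strategy only through the set of terminal sequences it reaches --- this is exactly what makes the union bound over sequences, rather than over strategies, legitimate; everything else is bookkeeping and standard concentration.
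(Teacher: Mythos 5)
Your proposal is correct and is essentially the paper's own argument: the paper also samples $k = \Theta((D^2/\eps^2)\log N)$ pure strategies from an exact equilibrium, applies Hoeffding per \emph{terminal sequence} (union-bounding over the $\le N$ sequences rather than over pure strategies), and converts the per-sequence error $\eps/D$ into a total error $\eps$ using the fact that a pure realization plan has at most $D$ nonzero terminal-sequence entries. Your ``fact (ii)'' is exactly the paper's H\"older step $\abs{y^T A\hat x - y^T Ax} \le \norm{y}_1 \norm{A\hat x - Ax}_\infty$ with $\norm{y}_1 \le D$, just written out without the sequence-form matrix notation.
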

Intuitively, $D$ is a measure of how much information the players have in the game. A player who learns no information whatsoever throughout the game will have $D = 1$, so this proposition matches the sparseness result \cite{Lipton03:Playing} in the normal-form case. On the other hand, a player with perfect information may have $D = \Omega(\sqrt{N})$ or even larger, in which case this proposition is vacuous.

Under the assumptions of \Cref{s:alphabeta} except perfect information, any given pure strategy is supported on $O(\sqrt{N})$ nodes. Thus, by \Cref{prop:eq-cert}, we have the following result which implies the existence of small certificates when $D = O(N^c)$ for $c < 1/4$:
\begin{corollary}
	Under the assumptions of \Cref{prop:low-info} and \Cref{s:alphabeta} except perfect information, $G$ has an $\eps$-certificate of size $O(\sqrt{N} (D^2/\eps^2) \log N)$.
\end{corollary}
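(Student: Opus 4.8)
The plan is to combine the sparseness guarantee of \Cref{prop:low-info} with the support-to-certificate reduction of \Cref{prop:eq-cert}, and to control the size of the resulting trunk by bounding how many nodes a sparse strategy profile can touch under the structural assumptions of \Cref{s:alphabeta}. The final certificate size $O(\sqrt{N} (D^2/\eps^2) \log N)$ should emerge as a product of two factors: the number of pure strategies in the sparse equilibrium, $O((D^2/\eps^2)\log N)$ from \Cref{prop:low-info}, and the number of nodes that a single pure strategy can reach, $O(\sqrt N)$ from the alternating-move, bounded-branching, uniform-depth assumptions.

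First I would invoke \Cref{prop:low-info} to obtain an $\eps$-Nash equilibrium $\sigma$ in which each player mixes among at most $k := O((D^2/\eps^2)\log N)$ pure strategies. Next I would count the nodes that must be included in the trunk $\tilde G$ guaranteed by \Cref{prop:eq-cert}, namely every $h$ with $\sigma_{-i}(h) > 0$ for some player $i$. The key structural observation is that under the assumptions of \Cref{s:alphabeta} (no nature, uniform branching factor $b = O(1)$, alternating moves, uniform depth $d$, so $N = b^d$), any single pure strategy of one player fixes that player's action at each of its $O(b^{d/2})$ information sets, and the set of histories consistent with that fixed pure strategy—allowing the opponent to play freely—forms a subtree touching $O(b^{d/2}) = O(\sqrt N)$ nodes; this is exactly the per-pure-strategy support bound remarked on in the corollary's preamble. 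I would make this counting precise by observing that a pure strategy for player $i$, combined with arbitrary opponent play, reaches at most one node per player-$(-i)$ sequence, and the alternating structure caps the number of such reachable histories at $O(\sqrt N)$.

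Then the size bound follows by a union over the sparse supports: the trunk $\tilde G$ need only contain the histories reachable when \emph{some} player plays one of their $\le k$ pure strategies against free opponent play, and each such pure strategy contributes $O(\sqrt N)$ nodes, giving $O(k \sqrt N) = O(\sqrt{N}(D^2/\eps^2)\log N)$ nodes total. Since \Cref{prop:eq-cert} certifies that $(\tilde G, \sigma)$ is an $\eps$-certificate whenever $\tilde G$ is the smallest trunk containing all nodes with positive opponent reach probability, and $\sigma$ is supported only on these $k$ pure strategies per player, the smallest such trunk has at most this many nodes. This yields the claimed certificate.

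The main obstacle I anticipate is the node-counting step: justifying cleanly that a pure strategy's support—taken together with the trunk-closure operation of \Cref{prop:eq-cert}, which may add internal nodes and collapse others into pseudoterminals—stays within $O(\sqrt N)$, rather than naively blowing up when one sums reachable histories across all $k$ strategies with overlaps. I would handle this by bounding the union generously (each pure strategy independently contributes $O(\sqrt N)$ and there are $2k$ pure strategies across both players, so the union has $O(k\sqrt N)$ nodes), which suffices for the stated upper bound without needing to track overlaps precisely. A secondary subtlety is confirming that the alternating-move and uniform-depth assumptions indeed force the per-strategy support to be $O(b^{d/2})$ and not larger; this is the analogue of the alpha-beta branching argument and follows because fixing one player's moves halves the effective depth of the reachable subtree.
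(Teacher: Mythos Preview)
Your proposal is correct and follows essentially the same route as the paper: the paper's justification for the corollary is precisely the sentence preceding it, namely that under the \Cref{s:alphabeta} assumptions any single pure strategy is supported on $O(\sqrt{N})$ nodes, so combining \Cref{prop:low-info} with \Cref{prop:eq-cert} and a union over the $O((D^2/\eps^2)\log N)$ pure strategies per player gives the bound. Your treatment of the node-counting step and the union bound is exactly what the paper leaves implicit.
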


As in the case of normal-form games, in general, exact certificates may need to include the whole game tree. However, in some cases, we can do better. For example, games with a natural {\em public game tree}\footnote{Informally, the public game tree is the game tree visible to an observer with no knowledge of the players' private information.}~\cite{Johanson11:Accelerating}   often have sparse equilibrium strategies \cite{Schmid14:Bounding} and thus small certificates by \Cref{prop:eq-cert}. We will also show later with empirical experiments that many practical games have small exact certificates.

\subsection{Small certificates do not always exist in extensive-form games}

In light of the above results, one might hope that there are sparse approximate equilibria in extensive-form games, which would allow small certificates in such games:
\begin{question}[Existence of small $\eps$-certificates]\label{conj:sparse-cert}
	Let $G$ be a two-player zero-sum game with $N$ nodes. Suppose that $G$ satisfies the assumptions in \Cref{s:alphabeta}. Let $\eps > 0$. Is there always an $\eps$-certificate with $O(N^c \op{poly}(1/\eps))$ tree nodes, for some universal constant $c < 1$?
\end{question}
It would be nice if this had a positive answer, since that would interpolate between the cases of normal form and perfect information, which, as discussed above, both have $\tilde O(\sqrt{N}/\eps^2)$-sized certificates. 
We show that, unfortunately, the answer is negative. As a counterexample, consider playing $T$ rounds of matching pennies. After each round, P2 learns what P1 played, but P1 does not learn what P2 played. Each round is worth $1/T$ points, so the maximum score is $1$. The game tree has uniform depth $2T$ and uniform branching factor $2$, for a total of $N := 2^{2T}$ terminal nodes. 

\begin{theorem}\label{prop:counterexample}
	Any $\eps$-certificate of this game must have at least $\Omega(N^{1 - O(\eps)})$ nodes.
\end{theorem}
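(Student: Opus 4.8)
The plan is to show that any $\eps$-certificate $(\tilde G,\sigma)$ is forced to be almost all of $G$, by exhibiting a profitable deviation for P2 against $\sigma_1$ whenever the trunk is small. Write $N=2^{2T}$, and at a leaf $h$ of a trunk sitting at depth $d$ call the $\lceil d/2\rceil$ rounds whose moves are both fixed at $h$ the \emph{completed} rounds and the other $T-\lceil d/2\rceil$ the \emph{pending} rounds. Since each round is worth $1/T$, correctness of a trunk's bounds forces $\alpha_i(h)\le u_i(\text{completed rounds at }h)-(T-\lceil d/2\rceil)/T$ and $\beta_i(h)\ge u_i(\text{completed rounds at }h)+(T-\lceil d/2\rceil)/T$, so in particular $\alpha_i(h)\le u_i(\text{completed rounds at }h)\le\beta_i(h)$ and $\beta_i(h)-\alpha_i(h)\ge 2(T-\lceil d/2\rceil)/T$. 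For a profile, let the \emph{cutoff} be the node at which $\tilde G$ stops the play. I would start from the fact that the real game has value $0$ (either player randomizing uniformly each round makes every round worth $0$ in expectation against any opponent), so for every $\sigma_2$ P1's best-response value in her optimistic game is $\ge 0$, and symmetrically $\beta_2^*(\sigma_1)\ge 0$; by \Cref{def:pseudonash} this gives $\alpha_1(\sigma),\alpha_2(\sigma)\ge-\eps$. Now $\alpha_1(\sigma)=\E_\sigma\alpha_1(\text{cutoff})\le\E_\sigma u_1(\text{completed rounds at cutoff})$, so $\E_\sigma u_1(\text{completed})\ge-\eps$; as completed-round payoffs satisfy $u_1=-u_2$ this yields $\E_\sigma u_2(\text{completed})\le\eps$, hence $\alpha_2(\sigma)\le\E_\sigma u_2(\text{completed})\le\eps$, hence $\beta_2^*(\sigma_1)\le\alpha_2(\sigma)+\eps\le2\eps$. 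So P2 cannot beat $\sigma_1$ by more than $2\eps$ even if every pseudoterminal is scored at its upper bound for her.

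Next I would show $\sigma_1$ is forced close to uniform. Extend $\sigma_1$ uniformly below $\tilde G$ to a full strategy $\hat\sigma_1$ on $G$, let $\mu_1$ be the induced distribution over P1's action sequences $a\in\{0,1\}^T$, and let $B_k(a)\in[0,1]$ be the total-variation bias of $\hat\sigma_1$'s round-$(k{+}1)$ conditional after prefix $a_{\le k}$ (so $B_k(a)=0$ below $\tilde G$). Against $\sigma_1$, P2 can play \emph{greedily}, each round putting all mass on the action that mismatches P1's likelier action; this earns P2 exactly $B_k(a)/T$ in round $k{+}1$, and because $\hat\sigma_1$ is uniform below any pseudoterminal the pending-round bonus there already dominates the (zero) greedy continuation value, so P2's greedy value in the optimistic game is at least its value in $G$, namely $\tfrac1T\E_{a\sim\mu_1}\big[\sum_k B_k(a)\big]$. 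Combined with the first paragraph, $\E_{a\sim\mu_1}\big[\sum_k B_k(a)\big]\le 2\eps T$. Since $\mu_1(a)\le\prod_k\tfrac{1+B_k(a)}2\le 2^{-T}\exp\!\big(\sum_k B_k(a)\big)$, a Markov bound on $\sum_k B_k$ shows that outside a set of $\mu_1$-mass $O(\eps/\rho)$ one has $\mu_1(a)\le 2^{-(1-\rho)T}$; thus $\mu_1$ is nearly flat.

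Finally, suppose for contradiction $|\tilde G|<N^{1-\delta}=2^{(2-\delta)T}$ with $\delta=C\eps$ and $C$ a large absolute constant. Then $\tilde G$ has fewer than $2^{(2-\delta)T}$ leaves, hence that many pseudoterminals; since the pseudoterminals together with the surviving original leaves partition the $N$ leaves of $G$, an easy count shows the pseudoterminals of depth $\le(2-\delta/2)T$ — each with at least $\delta T/4$ pending rounds — lie above a $1-o(1)$ fraction of the uniformly random leaves of $G$. Let P2 deviate to playing uniformly. Since P2 then makes every round worth $0$ in expectation and the event ``round $j$ has survived inside $\tilde G$'' does not depend on P2's round-$j$ move, a stopping-time argument gives that the value of this deviation in the optimistic game is at least $\tfrac1T\,\E[\text{number of pending rounds at the cutoff}]$. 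Combining near-flatness of $\mu_1$ with the fact that the leaves not lying above a shallow pseudoterminal have uniform-measure $o(1)$, the play under $(\sigma_1,\text{uniform})$ is cut off at a pseudoterminal of depth $\le(2-\delta/2)T$ with probability $1-o(1)-O(\eps/\delta)$, so this deviation is worth at least $(\delta/4)\big(1-o(1)-O(\eps/\delta)\big)$, which exceeds $2\eps$ for large $T$ once $C$ is large enough. This contradicts $\beta_2^*(\sigma_1)\le 2\eps$, so $|\tilde G|\ge N^{1-O(\eps)}$.

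The crux is this last step. A legitimate certificate can have a trunk that is tiny in total yet extends very deep exactly along the support of $\sigma$, so merely inspecting the nodes $\sigma$ reaches proves nothing; one has to push P2 \emph{off} the equilibrium play — here, to the uniform deviation — and argue that the unexploitability of $\sigma_1$, i.e.\ the near-uniformity extracted in the second paragraph, is precisely what forces that deviation to wander into the shallow, loosely-bounded part of the trunk with overwhelming probability. The delicate bits are: pinning down which rounds count as completed at a cutoff; keeping the uniform deviation's stopping-time/martingale argument honest (the event that the play has survived round $j$ depends on P1's round-$j$ move but not P2's, which is exactly why P2's uniform randomization neutralizes each round); and trading the Markov bound on $\sum_k B_k$ against the counting bound on shallow pseudoterminals so that both error terms stay below a constant multiple of $\eps$.
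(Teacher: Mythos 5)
Your argument is correct (modulo harmless normalization constants and an off-by-one in counting completed rounds at odd-depth cutoffs), and it reaches the paper's contradiction by the same high-level strategy: P1's unexploitability forces her reach distribution over action prefixes to be nearly uniform, a trunk with $N^{1-\Theta(\eps)}$ leaves cannot cover a nearly uniform distribution at sufficient depth, and P2's uniform deviation then harvests the optimistic pending-round bonus at the shallow pseudoterminals. Where you genuinely diverge is in how near-uniformity is established. The paper proves by induction on $T$ that every $\eps$-equilibrium strategy of P1 has entropy at least $T(1-2\eps)$ bits (via $h(p)\ge 2\min(p,1-p)$) and converts this into ``P1 places probability at least $2^{-t}$ on half of her round-$t$ prefixes,'' then counts nodes at the single level $t=(1-16\eps)T$. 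You instead bound the total conditional bias $\E\big[\sum_k B_k\big]\le O(\eps T)$ by exhibiting the greedy beat-the-likelier-action deviation for P2 and dominating its optimistic value by $\beta_2^*(\sigma_1)\le 2\eps$, then get multiplicative flatness of $\mu_1$ outside a set of mass $O(\eps/\rho)$ via the product bound and Markov, and finally count over all shallow pseudoterminals with a change of measure to the uniform leaf distribution. Your route replaces the entropy induction with a self-contained exploitability-implies-flatness argument and makes explicit the measurability point the paper leaves implicit (survival of round $j$ is determined before P2's round-$j$ move, which is exactly why her uniform randomization zeroes out each completed round); the paper's entropy lemma is the more reusable intermediate statement. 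Both lose only constant factors in the $O(\eps)$ exponent.
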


It does not help to add the assumption that the game is win-loss: any zero-sum game can be made win-loss by adding normal-form gadget games to the terminal nodes which force the players to mix.

\section{Black-box setting}\label{s:blackbox}
For the remainder of this paper, we will assume that we are not given access to the full game tree. Instead, we are only given black-box access to the game, in the form of a function that, given a node $h$ (in the form of a history of actions), gives us: 
\begin{enumerate}
    \item upper and lower bounds on the value of any terminal descendant of $h$,
    \item if $h$ is nonterminal, the player to act at that node, and a list of legal actions; and
    \item if the player to act at $h$ is nature, a single sampled action from nature's action distribution.
\end{enumerate}

The game may possibly be very large, or even infinite, but we will assume that every node has some terminal descendant (so that (1) is well-defined), and that the game has a finite $0$-certificate. The bounds given by (1) may be infinite, either because the oracle does not give optimal bounds, or because the game is infinite and the payoffs along a branch may be unbounded.

The first challenge is approximating the true nature distributions via samples. We thus give a result regarding the sample complexity of doing this for a given pseudogame with bounded payoffs\footnote{In the unbounded payoff case, the task is hopeless, since it is always possible for there to be a branch of infinite expectation that is reached so rarely that it has never been sampled.}. 

\begin{theorem}[Sample complexity of approximating a game]\label{prop:sample-complexity}
	Let $G$ be a game with $N$ nodes and bounded payoffs, and suppose that the true nature distributions are unknown but have been approximated by sampling at every nature node.  Let $\hat \sigma_0$ be the approximated nature strategy resulting from this sampling. Fix a player $i$. Let $\hat u_i(\sigma)$ denote the expected utility of player $i$ when the players play strategy $\sigma$ and nature plays $\hat \sigma_0$. Let $D$ be the maximum support size over terminal nodes of any pure strategy profile in the perfect-information refinement of $G$. Suppose that, for every nature node $h$ is sampled at least
	$
	 \hat \sigma_0(h) (D^2/2\eps^2) \log (2N/\delta)
	$
	times.	Then, with probability $1 - \delta$, for any strategy profile $\sigma$, we have $\abs{u_i(\sigma) - \hat u_i(\sigma)} \le \eps$.
\end{theorem}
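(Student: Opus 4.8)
The plan is to split the statement into a structural reduction to pure strategy profiles and a concentration estimate on per‑terminal reach probabilities.

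\emph{Reduction to pure profiles.} Once the players' actions are fixed, the only remaining randomness is nature's, so I would first argue that it suffices to bound $|u_i(\sigma)-\hat u_i(\sigma)|$ over pure $\sigma$. Writing $u_i(\sigma)=\sum_z \sigma_0(z)\big(\prod_j\sigma_j(z)\big)u_i(z)$ and using perfect recall (each $\sigma_j(z)$ is a product of entries coming from distinct information‑set simplices, since no path revisits an infoset), the map $g(\sigma):=u_i(\sigma)-\hat u_i(\sigma)$ is multilinear on the product of all those simplices; a multilinear function, and hence its absolute value, attains its maximum at a vertex of that polytope, i.e.\ at a pure profile. Fix such a $\sigma$. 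Because the players' actions are now determined everywhere, the set $Z_\sigma$ of terminals of positive probability is exactly the support of the pure profile that $\sigma$ induces in the perfect‑information refinement (and $\op{supp}\hat\sigma_0(\cdot\mid h)\subseteq\op{supp}\sigma_0(\cdot\mid h)$, since an action never sampled is never reached empirically), so $|Z_\sigma|\le D$. Rescaling payoffs into $[-1,1]$ (legitimate as payoffs are bounded), $|u_i(\sigma)-\hat u_i(\sigma)|\le\sum_{z\in Z_\sigma}|\sigma_0(z)-\hat\sigma_0(z)|\,|u_i(z)|$, so it is enough to show that with probability $1-\delta$, $|\sigma_0(z)-\hat\sigma_0(z)|\le\eps/D$ simultaneously for every terminal $z$.

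\emph{Concentration.} Write the path to $z$ through its nature nodes $h_1,\dots,h_m$, with true conditionals $p_k=\sigma_0(a_k\mid h_k)$ and empirical conditionals $\hat p_k$, so $\sigma_0(z)=\prod_kp_k$ and $\hat\sigma_0(z)=\prod_k\hat p_k$. The cleanest route is to observe that the prescribed budget---sampling $h$ at least $\hat\sigma_0(h)(D^2/2\eps^2)\log(2N/\delta)$ times---is exactly what results from drawing $M:=(D^2/2\eps^2)\log(2N/\delta)$ independent full nature‑trajectories (a trajectory passes through $h$ an $\hat\sigma_0(h)$‑fraction of the time), under which $\hat\sigma_0(z)$ is literally the empirical mean of $M$ i.i.d.\ $\{0,1\}$ indicators of the event ``the trajectory reaches $z$'', whose expectation is $\sigma_0(z)$. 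Hoeffding's inequality then gives $\P\big[|\hat\sigma_0(z)-\sigma_0(z)|>\eps/D\big]\le 2\exp(-2M\eps^2/D^2)=\delta/N$, and a union bound over the at most $N$ terminal nodes finishes the argument: on the resulting $(1-\delta)$‑probability event, every pure $\sigma$ satisfies $|u_i(\sigma)-\hat u_i(\sigma)|\le|Z_\sigma|\cdot(\eps/D)\le\eps$, and hence so does every $\sigma$.

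\emph{Main obstacle.} The delicate point is justifying this reduction when the nature nodes are sampled \emph{independently} rather than along shared trajectories, so that $\hat\sigma_0(z)=\prod_k\hat p_k$ is a product of independent empirical means. Here the natural tool is the telescoping bound $|\sigma_0(z)-\hat\sigma_0(z)|\le\sum_k\hat\sigma_0(h_k)\,|p_k-\hat p_k|$; combined with a Hoeffding estimate $|p_k-\hat p_k|\le\sqrt{\log(2N/\delta)/(2n_{h_k})}\le\eps/\big(D\sqrt{\hat\sigma_0(h_k)}\big)$ it yields $|\sigma_0(z)-\hat\sigma_0(z)|\le(\eps/D)\sum_k\sqrt{\hat\sigma_0(h_k)}$, and one must control $\sum_k\sqrt{\hat\sigma_0(h_k)}$. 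This is precisely where the reach‑probability weighting in the sample count is essential: the $\hat\sigma_0(h_k)$ are nonincreasing along the path, and either their geometric decay or a multiplicative Chernoff bound (which replaces the error of $p_k$ by one scaling like $\sqrt{p_k/n_{h_k}}$, so that successive terms telescope against the reach probabilities) keeps the sum bounded. Pinning this last estimate down---so that the final bound depends on $D$ and not on the depth or branching of the game---is the crux of the argument.
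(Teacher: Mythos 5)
Your overall architecture is the same as the paper's: the paper proves this theorem by viewing nature as a single player facing a perfectly-informed opponent who controls all actual players, and invoking \Cref{lem:lipton} (a Lipton--Markakis--Mehta-style argument: Hoeffding per sequence, union bound over at most $N$ of them, H\"older with $\norm{y}_1 \le D$, then convexity to pass from pure to mixed profiles). Your pure-profile reduction via multilinearity, the per-terminal target $\abs{\sigma_0(z)-\hat\sigma_0(z)}\le\eps/D$, the union bound over $N$ terminals, and the final $\abs{Z_\sigma}\le D$ step are that same argument unrolled. The issue is that you stop short of proving the one step on which everything hinges, and you say so yourself: relating independent per-node sampling with the reach-weighted counts $\hat\sigma_0(h)\cdot(D^2/2\eps^2)\log(2N/\delta)$ to a scheme under which $\hat\sigma_0(z)$ concentrates like a single binomial proportion. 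Your fallback via telescoping plus additive Hoeffding demonstrably does not close: it yields $\abs{\sigma_0(z)-\hat\sigma_0(z)}\le(\eps/D)\sum_k\sqrt{\hat\sigma_0(h_k)}$, and when all conditionals on the path are close to $1$ every $\hat\sigma_0(h_k)$ is close to $1$, so the sum scales with the depth of the game rather than being $O(1)$; the multiplicative-Chernoff variant you gesture at still needs the $(1-p_k)$ factor to be tracked carefully and is not carried out. A proof whose final sentence is that the crux remains to be pinned down is not a proof.

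For what it is worth, the step you leave open is exactly the step the paper itself dispatches in one sentence (``sampling this number of samples at each nature node is at least as good as sampling $(D^2/2\eps^2)\log(2N/\delta)$ pure nature strategies''), and your ``cleanest route'' is the intended resolution: couple the per-node samples to $M$ i.i.d.\ pure nature strategies (equivalently, trajectories), under which the number of samples consumed at node $h$ is precisely the number of trajectories reaching $h$, i.e.\ $\hat\sigma_0(h)M$, the conditional empirical frequencies multiply out exactly to the fraction of trajectories reaching $z$, and $\hat\sigma_0(z)$ is literally a mean of $M$ i.i.d.\ indicators with mean $\sigma_0(z)$, so plain Hoeffding applies. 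If you make that coupling explicit (and note that extra independent samples at a node can only help, so the ``at least'' in the hypothesis is harmless), your argument is complete and matches the paper's; without it, the concentration step is missing.
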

Here, $D$ is some measure of how much randomness there is in $G$. For example, if $G$ has no nature nodes, $D = 1$. If $G$ has no player nodes, $D = N$. %
\begin{corollary}\label{prop:pseudogame-estimation}
	Let $\tilde G$ be a pseudogame, and consider approximating nature's strategy in $\tilde G$ to precision $\eps$ as per \Cref{prop:sample-complexity}. Let $\sigma$ be an $\eps'$-equilibrium of the approximated version of $\tilde G$. Then $\sigma$ is also an $(\eps' + 2\eps)$-equilibrium of $\tilde G$ with probability at least $1 - 2\delta\abs{\mc P}$.
\end{corollary}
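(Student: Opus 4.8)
The plan is to reduce to \Cref{prop:sample-complexity} by applying it to the two genuine games one obtains from $\tilde G$ by taking, respectively, the lower bounds and the upper bounds as the utilities. Write $(\tilde G, \alpha)$ for the game with the tree and information structure of $\tilde G$ and utility $\alpha_i(z)$ for player $i$ at terminal node $z$, and likewise $(\tilde G, \beta)$ for the game with utilities $\beta_i(z)$. Both have the same node count $N$, the same nature nodes, and the same structural parameter $D$ as $\tilde G$ (all of these depend only on the tree and chance structure, not on the payoffs), and both have bounded payoffs; so the single round of sampling done to the precision prescribed by \Cref{prop:sample-complexity} simultaneously satisfies that theorem's hypothesis for each of the two games. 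Applying the theorem for each player $i$ to $(\tilde G, \alpha)$ and to $(\tilde G, \beta)$ yields, each with probability at least $1 - \delta$, the estimates $\abs{\alpha_i(\sigma) - \hat\alpha_i(\sigma)} \le \eps$ and $\abs{\beta_i(\sigma) - \hat\beta_i(\sigma)} \le \eps$ for every strategy profile $\sigma$, where $\hat\alpha_i$ and $\hat\beta_i$ denote the corresponding expectations when nature plays the sampled distribution $\hat\sigma_0$. A union bound over these $2\abs{\mc P}$ events shows they all hold simultaneously with probability at least $1 - 2\delta\abs{\mc P}$; call this the good event and work on it henceforth.

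Next I would turn the value estimates into best-response-value estimates. Fix a player $i$ and an opponent profile $\sigma_{-i}$, and let $\beta_i^*(\sigma_{-i})$ and $\hat\beta_i^*(\sigma_{-i})$ be the best-response values against $\sigma_{-i}$ in $(\tilde G, \beta)$ and in its nature-approximation. Using the elementary inequality $\abs{\max_x f(x) - \max_x g(x)} \le \max_x \abs{f(x) - g(x)}$ with $f(\sigma_i) = \beta_i(\sigma_i, \sigma_{-i})$ and $g(\sigma_i) = \hat\beta_i(\sigma_i, \sigma_{-i})$, the good-event bound $\abs{\beta_i(\sigma_i, \sigma_{-i}) - \hat\beta_i(\sigma_i, \sigma_{-i})} \le \eps$ (valid for every $\sigma_i$) gives $\abs{\beta_i^*(\sigma_{-i}) - \hat\beta_i^*(\sigma_{-i})} \le \eps$.

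To finish, suppose $\sigma$ is an $\eps'$-equilibrium of the approximated pseudogame, i.e.\ $\hat\beta_i^*(\sigma_{-i}) - \hat\alpha_i(\sigma) \le \eps'$ for every player $i$ (\Cref{def:pseudonash}). Then, on the good event, for every player $i$,
\[
\beta_i^*(\sigma_{-i}) - \alpha_i(\sigma) \le \bigl(\hat\beta_i^*(\sigma_{-i}) + \eps\bigr) - \bigl(\hat\alpha_i(\sigma) - \eps\bigr) = \bigl(\hat\beta_i^*(\sigma_{-i}) - \hat\alpha_i(\sigma)\bigr) + 2\eps \le \eps' + 2\eps,
\]
which is exactly the statement that $\sigma$ is an $(\eps' + 2\eps)$-equilibrium of $\tilde G$. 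Since the good event has probability at least $1 - 2\delta\abs{\mc P}$, the corollary follows. There is essentially no analytic obstacle beyond \Cref{prop:sample-complexity} itself; the only thing to be careful about is the bookkeeping --- that $N$ and $D$ are shared by $\tilde G$, $(\tilde G,\alpha)$, and $(\tilde G,\beta)$, so that one sampling pass validates all $2\abs{\mc P}$ invocations, and that the union bound runs over exactly those events.
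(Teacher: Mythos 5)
Your proof is correct and follows essentially the same route as the paper's: a union bound over the $2\abs{\mc P}$ invocations of \Cref{prop:sample-complexity} (one per player per bound function $\alpha_i,\beta_i$), followed by transferring the $\eps$-accurate value estimates to best-response values and chaining with the $\eps'$-equilibrium condition. The only cosmetic difference is that you invoke the two-sided inequality $\abs{\max_x f - \max_x g}\le\max_x\abs{f-g}$ where the paper writes out the single direction it needs via an explicit best response $\sigma_i^*$.
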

In the above results, the (pseudo)game and sample size at each nature node $h$ are both held fixed; the probability is only over the random samples themselves. Thus, if running an algorithm that incrementally expands nodes in a pseudogame, the samples should in principle be re-drawn every time $\tilde G$ changes. The factor of $2\abs{\mc P}$ is not bothersome since $\abs{\mc P} \le N$ surely, so this incurs at most a constant factor in the sample complexity.  Importantly, the sample complexity depends only on the size and structure of the pseudogame $\tilde G$, not on whatever full game $G$ that $\tilde G$ may be a trunk of. 

In the rest of the paper, both for simplicity and to allow discussion of the case of unbounded payoffs, we will not deal with sampling. Instead, we will assume that the exact nature action distribution is given by the black-box oracle when a nature node is reached.

\section{The zero-sum case}\label{s:algorithm}
Our results so far have been valid for $n$-player general-sum games unless otherwise stated. 
In this section we focus on two-player zero-sum games, where one can hope\footnote{In the general-sum setting, finding an approximate Nash equilibrium is PPAD-complete, even for two players~\cite{Rubinstein16:Settling}, so we do not hope to devise certificate-finding algorithms for that case.} to perhaps efficiently {\em find} small certificates.
A two-player game is {\em zero-sum} if $u_1 = -u_2$. In this case, we refer to a single utility function $u$; it is understood that player 2's utility function is $-u$. In zero-sum games, all Nash equilibria have the same expected value; this is called the {\em value of the game}, and we denote it by $u^*$. The {\em exploitability} of an opponent strategy $\sigma_{-i}$ for player $i$ is then $\abs{u^*(\sigma_{-i}) - u^*}$.

\subsection{Certificates in zero-sum games}\label{s:cert-zerosum}
In the zero-sum case, we use a slightly different notion of $\eps$-equilibrium of a pseudogame, which will make the subsequent results more precise.
\begin{definition}
	A two-player pseudogame $(\tilde G, \alpha, \beta)$ is zero-sum if $\alpha_2 = -\beta_1$ and $\beta_2 = -\alpha_1$.
\end{definition}
As alluded to above, in this situation, we will drop the subscripts, and write $\alpha$ and $\beta$ to mean $\alpha_1$ and $\beta_1$. In particular, $(\tilde G, \alpha)$ and $(\tilde G, \beta)$ are zero-sum games. 
\begin{definition}\label{def:zerosumnash}
	An $\eps$-Nash equilibrium of a two-player zero-sum pseudogame $(\tilde G, \alpha, \beta)$ is a strategy profile $(x^*, y^*)$ for which $
	\beta^*(y^*) - \alpha^*(x^*) \le \eps.
	$
\end{definition}
In this sense, $\eps$ is the sum of the exploitabilities of both players' strategies. These are related to \Cref{def:pseudonash} as follows:
\begin{proposition}\label{prop:zerosum1}
	Any $\eps$-NE in the sense of \Cref{def:zerosumnash} is an $\eps$-NE in the sense of \Cref{def:pseudonash}.
\end{proposition}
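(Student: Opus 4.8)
The plan is to simply unwind the two notions of $\eps$-equilibrium and invoke two trivial monotonicity facts about best responses. Write $\sigma = (x^*, y^*)$. Recall that in a zero-sum pseudogame we have $\alpha_1 = \alpha$, $\beta_1 = \beta$, $\alpha_2 = -\beta$, $\beta_2 = -\alpha$, and that $\alpha^*(x^*) = \min_y \alpha(x^*, y)$ is player 1's security level under the pessimistic payoffs, while $\beta^*(y^*) = \max_x \beta(x, y^*)$ is the most player 1 can obtain under the optimistic payoffs against $y^*$. The hypothesis of \Cref{def:zerosumnash} is precisely the single scalar inequality $\beta^*(y^*) - \alpha^*(x^*) \le \eps$, and what must be shown is the pair of inequalities of \Cref{def:pseudonash}, one per player.

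For player $1$, the quantity to bound is $\beta_1^*(\sigma_{-1}) - \alpha_1(\sigma) = \beta^*(y^*) - \alpha(x^*, y^*)$; since $\alpha(x^*, y^*) \ge \min_y \alpha(x^*, y) = \alpha^*(x^*)$ (a minimum is at most any particular value), this is at most $\beta^*(y^*) - \alpha^*(x^*) \le \eps$. For player $2$, substituting $\beta_2 = -\alpha$ and $\alpha_2 = -\beta$ gives
\[
\beta_2^*(\sigma_{-2}) - \alpha_2(\sigma) = \max_y\bigl(-\alpha(x^*, y)\bigr) - \bigl(-\beta(x^*, y^*)\bigr) = \beta(x^*, y^*) - \alpha^*(x^*),
\]
and since $\beta(x^*, y^*) \le \max_x \beta(x, y^*) = \beta^*(y^*)$ (a maximum is at least any particular value), this too is at most $\beta^*(y^*) - \alpha^*(x^*) \le \eps$. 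That completes the argument.

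The only thing that needs any care — the closest this gets to an obstacle — is the bookkeeping of sign conventions from the definition of a zero-sum pseudogame: checking that player $2$'s optimistic best-response value against $x^*$ equals $-\alpha^*(x^*)$ (player $2$'s optimistic payoff is $-\alpha$, so maximizing it is minimizing $\alpha$) and that $\alpha_2(\sigma) = -\beta(x^*, y^*)$. I would note in passing that the proof uses only the one scalar inequality $\beta^*(y^*) - \alpha^*(x^*) \le \eps$ and never the fact that $x^*$ or $y^*$ is itself a best response to anything, nor that $\tilde G$ is a trunk of any actual game; the reverse implication need not hold, which is why \Cref{def:zerosumnash} is stated as the "slightly different", more precise notion.
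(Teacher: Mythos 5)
Your proof is correct and is essentially identical to the paper's: both handle player 1 via $\alpha(x^*,y^*) \ge \alpha^*(x^*)$ and player 2 via $\beta(x^*,y^*) \le \beta^*(y^*)$, reducing each of the two inequalities of \Cref{def:pseudonash} to the single scalar inequality of \Cref{def:zerosumnash}. The sign-convention bookkeeping you spell out is the same unwinding the paper leaves implicit.
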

\begin{proposition}\label{prop:zerosum2}
	Any $\eps$-NE in the sense of \Cref{def:pseudonash} is a $2\eps$-NE in the sense of \Cref{def:zerosumnash}.
\end{proposition}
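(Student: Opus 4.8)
The plan is to unpack both notions of $\eps$-equilibrium for the two-player zero-sum pseudogame, add the two player-wise inequalities furnished by \Cref{def:pseudonash}, and then absorb the resulting slack term using only the pointwise bound $\alpha_i(z) \le \beta_i(z)$.

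First I would fix an $\eps$-NE $\sigma = (x^*, y^*)$ in the sense of \Cref{def:pseudonash} and rewrite both players' conditions in terms of the single pair $(\alpha,\beta) := (\alpha_1,\beta_1)$, using the zero-sum relations $\alpha_2 = -\beta$ and $\beta_2 = -\alpha$. For player $1$, the condition $\beta_1^*(\sigma_{-1}) - \alpha_1(\sigma) \le \eps$ reads $\beta^*(y^*) - \alpha(x^*,y^*) \le \eps$, where $\beta^*(y^*)$ is player $1$'s best-response value in the game $(\tilde G, \beta)$ and $\alpha(x^*,y^*) := \E_{z\sim\sigma}\alpha(z)$. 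For player $2$, who maximizes $\beta_2 = -\alpha$ and realizes $\alpha_2 = -\beta$, the condition $\beta_2^*(\sigma_{-2}) - \alpha_2(\sigma) \le \eps$ becomes $\beta(x^*,y^*) - \alpha^*(x^*) \le \eps$, where $\alpha^*(x^*) := \min_y \alpha(x^*,y)$ is the value player $1$ is held to by $x^*$ in $(\tilde G, \alpha)$ (equivalently $\beta_2^*(\sigma_{-2}) = -\alpha^*(x^*)$).

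Next I would add these two inequalities to get $\bigl(\beta^*(y^*) - \alpha^*(x^*)\bigr) + \bigl(\beta(x^*,y^*) - \alpha(x^*,y^*)\bigr) \le 2\eps$. The second parenthesized quantity is nonnegative: since $\alpha(z) \le \beta(z)$ at every terminal node $z$, linearity of expectation gives $\alpha(x^*,y^*) = \E_{z\sim\sigma}\alpha(z) \le \E_{z\sim\sigma}\beta(z) = \beta(x^*,y^*)$. Dropping this nonnegative term only decreases the left-hand side, so $\beta^*(y^*) - \alpha^*(x^*) \le 2\eps$, which is exactly the condition of \Cref{def:zerosumnash} for $(x^*,y^*)$ to be a $2\eps$-NE of $(\tilde G, \alpha, \beta)$.

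The only delicate point is the bookkeeping in the first step: correctly expressing player $2$'s best-response and realized payoffs in the single-utility-function notation, keeping straight that deviations are scored with the optimistic bound $\beta$ while the realized profile is scored with the pessimistic bound $\alpha$. Once that dictionary is set, adding the two inequalities and invoking $\alpha \le \beta$ are immediate, and I expect no further obstacles; indeed this is precisely why the loss is only a factor of $2$ (from summing the two players' conditions) together with the free slack $\beta(\sigma) - \alpha(\sigma) \ge 0$.
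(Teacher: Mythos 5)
Your proof is correct and is essentially identical to the paper's: both add the two player-wise conditions of \Cref{def:pseudonash} (rewritten as $\beta^*(y^*)-\alpha(x^*,y^*)\le\eps$ and $\beta(x^*,y^*)-\alpha^*(x^*)\le\eps$) and discard the nonnegative slack $\beta(x^*,y^*)-\alpha(x^*,y^*)\ge 0$ coming from $\alpha\le\beta$. No gaps.
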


Let $(\tilde G, \alpha, \beta)$ be a pseudogame. Let $(x_*, y_*)$ be a Nash equilibrium of the game $(\tilde G, \alpha)$, and $(x^*, y^*)$ be a Nash equilibrium of $(\tilde G, \beta)$. We will call the pair of strategies $(x_*, y^*)$ a {\em pessimistic equilibrium} of $(\tilde G, \alpha, \beta)$ since both players are playing as if their utilities are as bad as possible. Similarly, we will call $(x^*, y_*)$ an {\em optimistic profile}\footnote{The pessimistic equilibrium is an equilibrium of the pseudogame. The optimistic profile may not be, hence the difference in naming.}. 

By definition, the pessimistic equilibrium is an $\eps$-NE of $(\tilde G, \alpha, \beta)$, where $\eps = \beta^* - \alpha^*$. This gives us an algorithm for finding the best certificate from a given trunk, that runs in time polynomial in the size of the trunk: to get a strategy for P1 (the maximizer player), solve the game $(\tilde G, \alpha)$, and to get a strategy for P2, solve $(\tilde G, \beta)$. Since the zero-sum game solver is used strictly as a subroutine, any solver of choice may be used: for example, a linear program (LP) solver with the sequence-form LP~\cite{Koller94:Fast,Zhang20:Sparsified}, modern variants of CFR~\cite{Brown19:Solving,Brown17:Reduced,Brown17:Dynamic,Brown15:Regret}, or first-order methods \cite{Hoda10:Smoothing,Kroer20:Faster}. If the solver only finds an $\eps'$-equilibrium of the game it is solving, the result is a certificate for $(\eps + 2 \eps')$-equilibrium. %

\subsection{Lower bounds}
Since solving zero-sum games can be done efficiently, there is some hope that small certificates can also be found efficiently. Another goal may be to find a certificate efficiently, say, in time polynomial in the size of the smallest certificate of a given game. Unfortunately, these are both impossible:
\begin{theorem}[Hardness of approximating the smallest certificate]\label{prop:hard1}
	Assuming $\op{P} \ne \op{NP}$, there is no $\op{poly}(N, 1/\eps)$-time algorithm that, given the game tree of a zero-sum game with $N$ nodes, outputs the smallest $\eps$-certificate of the game to better than a $\Theta(\log N)$ factor of approximation.
\end{theorem}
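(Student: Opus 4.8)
The plan is a gap-preserving reduction from Set Cover, using the fact that (assuming $\op P \neq \op{NP}$) the minimum cover size of an $n$-element universe cannot be approximated in polynomial time to within a factor $c\log n$ for some absolute constant $c>0$. Given a universe $U=[n]$ and subsets $S_1,\dots,S_k$ of $U$ with $\bigcup_j S_j = U$, I would construct in polynomial time an explicit two-player zero-sum game $G$ with $N=\op{poly}(n,k)$ nodes and fix a constant $\eps$ (say $\eps=1/4$) so that the size of the smallest $\eps$-certificate of $G$ is $\Theta(\tau^*\log k)$, where $\tau^*$ is the minimum cover size and the constants are absolute. Padding the instance so that $\tau^* = O(\op{poly}(\log n))$ and $\log k = \Theta(\log n) = \Theta(\log N)$, the factor-$c\log n$ gap in $\tau^*$ between ``yes'' and ``no'' instances is then inherited as a factor-$\Theta(\log N)$ gap in the size of the smallest $\eps$-certificate, so any $\op{poly}(N,1/\eps)$-time algorithm approximating the latter to within $o(\log N)$ would approximate $\tau^*$ to within $o(\log n)$. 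That the running time may depend on $1/\eps$ is immaterial, since $\eps$ is a fixed constant in the reduction.

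The game is an extensive-form gadget built from two ingredients. To keep the unavoidable overhead below the logarithmic gap, ``committing to a set'' and ``scrutinizing an element'' are each routed through a \emph{binary selection subtree} --- of depth $\lceil\log_2 k\rceil$ and $\lceil\log_2 n\rceil$ respectively --- so that a trunk which opens up the sub-gadgets of $t$ sets costs only $O(t\log k)$ nodes, while the unopened parts of the subtrees collapse into pseudoterminals with tight bounds. At the leaves I attach small \emph{matching-pennies gadgets} scaled by $\Theta(\eps)$ ($2\times 2$ win--loss games whose value equals the value of $G$ and whose unique equilibrium is uniform), of exactly the kind used to force randomization in the proof of \Cref{prop:counterexample}. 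Their payoffs are wired so that: (i) in the optimistic game $(\tilde G,\beta)$ of \emph{any} trunk $\tilde G$, player 1 can steer play to the leaf scrutinizing any element $e$ whose matching-pennies subgame $\tilde G$ has not opened, and there the optimistic bound exceeds the value of $G$ by $\Omega(\eps)$; and (ii) the matching-pennies subgame for $e$ can only be reached (hence opened) by first descending into the sub-gadget of some set $S_j$ with $e\in S_j$. Thus in any trunk, ``all element-leaves have pinned utility bounds'' is equivalent to ``the opened set sub-gadgets cover $U$''.

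Given these properties the two directions go as follows. \emph{Completeness:} from a cover $C$ with $\abs C=\tau^*$, form the trunk $\tilde G$ that opens exactly the sub-gadgets of the sets in $C$ together with the $O(\tau^*\log k)$ selection-path nodes reaching them, and collapses everything else with correct bounds; since every element-leaf now has pinned bounds, solving the pessimistic and optimistic games (or reading off a sparse equilibrium living inside $\tilde G$, via \Cref{prop:eq-cert}) gives $\beta^*-\alpha^*\le\eps$ in the sense of \Cref{def:zerosumnash}, i.e.\ an $\eps$-certificate with $O(\tau^*\log k)$ nodes. \emph{Soundness:} given any $\eps$-certificate $(\tilde G,\sigma)$, properties (i) and (ii) force the opened set sub-gadgets to cover $U$ --- otherwise an uncovered leaf witnesses $\beta^*(\tilde G)-\alpha^*(\tilde G)=\Omega(\eps)>\eps$ for a suitable choice of scalings --- hence $\tilde G$ opens sub-gadgets of at least $\tau^*$ distinct sets, which sit at $\tau^*$ distinct leaves of the depth-$\lceil\log_2 k\rceil$ selection subtree, and since $\tau^* = O(\op{poly}(\log n)) \ll k^{1-\Omega(1)}$, the subtree of $\tilde G$ spanning those leaves already has $\Omega(\tau^*\log k)$ nodes, so $\abs{\tilde G}=\Omega(\tau^*\log k)$.

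The main obstacle is engineering the matching-pennies gadgets and the wiring so that properties (i)--(ii), and in particular the soundness half --- that an uncovered element genuinely forces a $\Theta(\eps)$ gap between the optimistic and pessimistic games of \emph{every} trunk, no matter how the adversary collapses the rest of the tree --- hold with the constants lining up. This is where I expect to reuse, essentially verbatim, the forced-mixing analysis behind \Cref{prop:counterexample}: it guarantees that the matching-pennies structure cannot be collapsed cheaply, pinning down which parts of the tree a valid certificate must contain. The remaining work --- checking $N=\op{poly}(n,k)$, arranging the padding so that $\tau^* = O(\op{poly}(\log n))$ and $\log k = \Theta(\log N)$, and the arithmetic turning an $o(\log N)$-approximation of the certificate size into an $o(\log n)$-approximation of $\tau^*$ --- is routine.
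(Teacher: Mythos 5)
Your high-level plan---a gap-preserving reduction from Set Cover using its $\Theta(\log n)$ inapproximability---is exactly the paper's, but your construction has two accounting errors that break the gap, and both trace back to insisting that $\eps$ be a constant. First, your property (i) requires that \emph{every} element's matching-pennies gadget be opened in any valid trunk (otherwise P1 steers optimistic play there and gains $\Omega(\eps)$). That alone costs $\Omega(n)$ nodes in every $\eps$-certificate, yet your claimed optimum is $\Theta(\tau^*\log k)$, which under your padding ($\tau^*=O(\op{poly}(\log n))$, $\log k=\Theta(\log n)$) is $\op{poly}(\log n)\ll n$. The mandatory $\Theta(n)$ additive term swamps the $\tau^*$-dependence, so a factor-$\log n$ gap in $\tau^*$ does not transfer to a gap in certificate size. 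Second, the soundness claim that $\tau^*$ distinct leaves of a depth-$\lceil\log_2 k\rceil$ selection subtree force $\Omega(\tau^*\log k)$ nodes is false: if those leaves happen to be clustered (say, the leftmost $\tau^*$ leaves), the subtree of $\tilde G$ spanning them and the root has only $O(\tau^*+\log k)$ nodes, and nothing in the reduction prevents some minimum cover from sitting on clustered leaves. Even setting aside the first problem, your lower bound therefore degrades to $\Omega(\tau^*)$ against an upper bound of $O(\tau^*\log k)$, losing exactly the logarithmic factor the theorem is about.

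The missing idea is to make the cost \emph{per opened set} a fixed quantity that dominates everything else, so the certificate size is proportional to the number of opened sets in both directions. The paper does this bluntly: P2 first chooses to play or leave (leaving yields $1/2m$ to P1); if P2 plays, P1 picks a set index $i\in\{1,\dots,m\}$, then walks through a chain of $m$ dummy ``quit'' decisions whose sole purpose is padding, and finally P2, not knowing $i$, picks $u\in U$, with P1 paid $1$ iff $u\in S_i$. The game value is $1/2m$; expanding the branches of a cover of size $k$ lets P1 pessimistically guarantee $1/k\ge 1/m$, forcing P2 to leave, so that trunk is a $0$-certificate of size $\Theta((m+n)k)$; conversely any $\eps$-certificate with $\eps<1/2m$ must expand a cover, and each expanded set costs $\Theta(m+n)$. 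Crucially the paper takes $\eps<1/2m$, i.e., instance-dependent---which is permitted because the theorem only rules out $\op{poly}(N,1/\eps)$-time algorithms and here $1/\eps=O(N)$---and thereby avoids matching-pennies gadgets, forced-mixing arguments, and the selection-tree bookkeeping entirely. If you keep the constant-$\eps$ ambition you still owe fixes for both problems above; if you drop it, your construction essentially collapses to the paper's.
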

\begin{theorem}\label{prop:hard2}
	There is no algorithm for zero-sum game solving in the black-box setting, even assuming bounded branching factor, with runtime subexponential in the size of the smallest certificate.
\end{theorem}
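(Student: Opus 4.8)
The plan is to build a ``needle in a haystack'' family of zero-sum games, each with a minimum certificate of size $\Theta(k)$, on which every black-box algorithm needs $2^{\Omega(k)}$ oracle queries before it can output \emph{any} valid certificate; letting $k\to\infty$ then rules out runtime subexponential in the certificate size. For each string $r\in\{0,1\}^k$ let $G_r$ be the complete binary tree of depth $k$ in which player~1 owns every internal node, so that P1's $k$ binary choices navigate to one of $2^k$ ``rooms'' (the leaves); the leaf indexed by $r$ pays $+1$ to P1 and every other leaf pays $0$. This is already a one-player zero-sum game (a search problem); hanging a value-$0$ matching-pennies gadget off each non-special room makes P2's strategy space nontrivial without affecting any argument below, if a genuinely two-player example is desired. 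The branching factor is $2$. In $G_r$, P1 simply walks to room $r$, so $u^*=1$, and a consistent oracle is: report the tight bounds $[1,1]$ at the special leaf and the loose bounds $[-1,1]$ at every other node (every internal node and every non-special leaf). This choice is correct and makes $G_r$ have a finite $0$-certificate, so the black-box promise is met.

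First I would pin down the minimum certificate of $G_r$. The trunk given by the $k$-node root-to-room-$r$ path, that special leaf, the $k$ collapsed siblings (pseudoterminals with bounds $[-1,1]$), and the off-path gadgets collapsed likewise, together with the pure strategy ``walk to room $r$,'' is a $0$-certificate: in the $\alpha$-game P1 attains $1$ (special leaf has $\alpha=1$, everything off-path has $\alpha=-1$) and in the $\beta$-game P1 attains $1$ as well, so $\beta^*-\alpha^*=0$. Conversely I would show every valid $0$-certificate of $G_r$ contains the special leaf: a $0$-certificate forces $\alpha^*=\beta^*$, and since the true continuation value at any collapsed node lies between its reported bounds we have $\alpha^*\le u^*\le\beta^*$, so this common value equals $u^*=1$; but $\alpha^*\ge 1$ requires some leaf of the trunk to carry $\alpha$-value $1$, and since every collapsed internal node dominates a value-$0$ leaf (so its only correct bounds force $\alpha\le 0$), the only possibility is that the special leaf of $G_r$ is itself a true-terminal leaf of the trunk. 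Hence the trunk contains the special leaf and its $k$ ancestors, and the minimum certificate size is $\Theta(k)$.

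Next I would run an adaptive-adversary argument. The adversary maintains the set $S\subseteq\{0,1\}^k$ of rooms still consistent with the transcript, initially all of $\{0,1\}^k$. Every query to an internal node, a gadget node, or a not-yet-excluded non-special leaf is answered with $[-1,1]$; a query to the leaf of a room $r\in S$ with $|S|\ge 2$ is answered $[-1,1]$ and $r$ is removed from $S$; once $S=\{r^*\}$ the adversary commits to $G_{r^*}$ and answers accordingly. All answers are consistent with $G_{r'}$ for every $r'\in S$, and $G_{r^*}$ has a finite $0$-certificate, so the construction is legal. The crux is that no single trunk is a valid $0$-certificate of two distinct games $G_{r'},G_{r''}$: in a common trunk, the leaf indexed by $r'$ has value $1$ in $G_{r'}$ but $0$ in $G_{r''}$, so its only correct bounds have lower bound $\le 0$ and upper bound $\ge 1$, making it a pseudoterminal of $\alpha$-value $\le 0$ in $G_{r'}$; then, as above, no leaf supplies $\alpha$-value $1$ in $G_{r'}$, so $\alpha^*<\beta^*$ there and it fails to $0$-certify $G_{r'}$. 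Therefore, if the algorithm halts and outputs a certificate while $|S|\ge 2$, it is wrong on some $G_{r'}\in S$; a correct algorithm must drive $S$ to a singleton, costing at least $2^k-1$ queries. Since each query costs at least a unit of time and the minimum certificate has size $s=\Theta(k)$, the runtime is at least $2^k-1=2^{\Omega(s)}$, genuinely exponential in $s$; doing this for every $k$ (and invoking Yao's minimax principle against the uniform distribution over $\{G_r\}$ to handle randomized algorithms) yields the theorem.

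The step I expect to be the main obstacle is the ``certificate uniqueness'' claim --- that a single trunk cannot $0$-certify two of the games. Making it rigorous requires carefully combining the bound-correctness clause in the definition of a trunk, the inequalities $\alpha^*\le u^*\le\beta^*$, and the fact that a $0$-equilibrium of the pseudogame forces the reached pseudoterminals to have matching bounds; in particular one must rule out the algorithm ``guessing'' exact bounds it has not verified (for instance assigning $[0,0]$ to an unqueried leaf), which is exactly where the binary-selector structure --- every proper ancestor of the special leaf also dominates a value-$0$ leaf --- is used. One further remark: the $\eps$-certificate version for any $\eps<1$ (indeed any $\eps<2$) follows verbatim, since a trunk whose leaves all carry the loose bounds $[-1,1]$ has $\beta^*-\alpha^*\ge 2$.
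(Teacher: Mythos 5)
Your proposal is correct and follows essentially the same route as the paper: a hidden-target ``needle in a haystack'' game whose smallest certificate is the $\Theta(k)$-size path to the target leaf, combined with an adversary argument showing any algorithm must expand $\Omega(2^k)$ nodes to locate it. The paper's version is a two-sentence sketch (with P2 choosing to win or lose at non-target leaves, which plays the same information-hiding role as your matching-pennies gadgets and loose oracle bounds); your writeup just fills in the minimum-certificate and certificate-uniqueness details that the paper leaves implicit.
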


These hardness results have slightly different flavors and consequences. The hardness in \Cref{prop:hard1} comes from the imperfect information: in the perfect-information setting, the task can be done with a variant of alpha-beta search in linear time. Further, in practice, we usually do not care about finding the {\em smallest} certificate, as long as we can efficiently find one of reasonable size. The hardness in \Cref{prop:hard2} is more fundamental: it comes from the fact that we cannot assume access to any reasonable heuristic of where to explore; thus, we may explore the optimal path of play last in the worst case, resulting in a large certificate.

\subsection{An algorithm for solving black-box games}

Despite the difficulties presented by Theorems~\ref{prop:hard1} and~\ref{prop:hard2}, we present an algorithm for finding a certificate in a zero-sum game in the black-box setting, with nontrivial provable guarantees. For now, we will assume that the game $\tilde G$ has bounded payoffs; later we will relax this assumption. 

\begin{algorithm}[!htb]
\caption{Finding a certificate in a two-player zero-sum game}\label{alg:zerosum}
\begin{algorithmic}[1]
\STATE start with a pseudogame $(\tilde G, \alpha, \beta)$ that has only a root node.
\LOOP
    \STATE solve $(\tilde G, \alpha)$ and $(\tilde G, \beta)$ with an LP solver to obtain equilibria $(x_*, y_*)$ and $(x^*, y^*)$.\label{step:gamesolve}
    \STATE expand all pseudoterminal nodes of $\tilde G$ that appear in the support of $(x^*, y_*)$. \label{step:expand}
    \STATE \quad (if there are none, stop and output $\tilde G$ and the pessimistic equilibrium $(x_*, y^*)$.)
\ENDLOOP
\end{algorithmic}
\end{algorithm}

We use LP for the game solves in Line~\ref{step:gamesolve}, for three reasons. First, LP\footnote{using either an exact method such as simplex, or an interior-point method such as barrier with crossover} results in an exact solution (at least up to numerical tolerances), which is desirable because the support of the solution is relevant to Line~\ref{step:expand}; iterative solvers such as CFR typically return fully mixed solutions. Second, only a small number of changes are made to the LP with each node expanded, so LP algorithms that can be warm started, such as primal or dual simplex, can be efficient in practice. Third, it will allow us to adapt this algorithm to the case of unbounded payoffs, which we will see later; again, CFR cannot do that.

From the discussion in \Cref{s:cert-zerosum}, we know that this algorithm will always output an $0$-certificate. If we want an $\eps$-certificate for $\eps > 0$, we can also simply terminate the algorithm when $\beta^* - \alpha^* < \eps$. We now prove an important fact about \Cref{alg:zerosum}.
\begin{theorem}\label{prop:progress}
	A pseudogame has a $0$-Nash equilibrium if and only if it has an optimistic profile with no pseudoterminal node in its support.
\end{theorem}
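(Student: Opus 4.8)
The plan is to route both implications through the single scalar identity $\alpha^* = \beta^*$, where $\alpha^*,\beta^*$ are the values of the zero-sum games $(\tilde G,\alpha)$ and $(\tilde G,\beta)$. Write $\alpha(x,y),\beta(x,y)$ for the expected payoffs to P1 under the two bound functions. I will use three standing facts: (i) pointwise $\alpha\le\beta$, so $\alpha(x,y)\le\beta(x,y)$ for every profile and $\alpha^*\le\beta^*$; (ii) the minimax theorem, so an equilibrium strategy of either game is a maxmin (resp.\ minmax) strategy attaining that game's value; and (iii) a profile $(x,y)$ has no pseudoterminal node in its support \emph{iff} $\alpha(x,y)=\beta(x,y)$, since $\E_{z\sim(x,y)}[\beta_1(z)-\alpha_1(z)]$ is a sum of nonnegative terms and, in the zero-sum case, a leaf is pseudoterminal exactly when $\alpha_1(z)<\beta_1(z)$.

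For the ``if'' direction I would take an optimistic profile $(x^*,y_*)$ whose support omits all pseudoterminal nodes, where $x^*$ is P1's part of an equilibrium of $(\tilde G,\beta)$ and $y_*$ is P2's part of an equilibrium of $(\tilde G,\alpha)$. By (iii), $\alpha(x^*,y_*)=\beta(x^*,y_*)$. Since $x^*$ is a maxmin strategy of $(\tilde G,\beta)$ we get $\beta(x^*,y_*)\ge\beta^*$, and since $y_*$ satisfies $\max_{x'}\alpha(x',y_*)=\alpha^*$ we get $\alpha(x^*,y_*)\le\alpha^*$. Combining with $\alpha^*\le\beta^*$ forces $\alpha^*=\beta^*$, and then the pessimistic equilibrium, which is by construction a $(\beta^*-\alpha^*)$-Nash equilibrium of the pseudogame, is a $0$-Nash equilibrium.

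For the ``only if'' direction I would start from a $0$-Nash equilibrium $(x,y)$, i.e.\ $\max_{x'}\beta(x',y)\le\min_{y'}\alpha(x,y')$, and set $\hat v:=\max_{x'}\beta(x',y)$, $\bar v:=\min_{y'}\alpha(x,y')$. The minimax theorem gives $\beta^*\le\hat v$ and $\bar v\le\alpha^*$, so together with $\hat v\le\bar v$ and $\alpha^*\le\beta^*$ the chain $\beta^*\le\hat v\le\bar v\le\alpha^*\le\beta^*$ collapses to a common value $v$. Squeezing the profile itself, $\alpha(x,y)\ge\bar v=v$ and $\beta(x,y)\le\hat v=v$, so with $\alpha(x,y)\le\beta(x,y)$ we get $\alpha(x,y)=\beta(x,y)=v$, hence by (iii) no pseudoterminal node in its support. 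It then remains to check that $(x,y)$ is actually an optimistic profile: for every $y'$, $\beta(x,y')\ge\alpha(x,y')\ge\bar v=v$, so $\min_{y'}\beta(x,y')=v=\beta^*$, i.e.\ $x$ is a maxmin strategy of $(\tilde G,\beta)$; symmetrically, for every $x'$, $\alpha(x',y)\le\beta(x',y)\le\hat v=v$, so $\max_{x'}\alpha(x',y)=v=\alpha^*$, i.e.\ $y$ is an equilibrium strategy for P2 in $(\tilde G,\alpha)$. Pairing maxmin with minmax, $x$ is the P1-component of some equilibrium of $(\tilde G,\beta)$ and $y$ the P2-component of some equilibrium of $(\tilde G,\alpha)$, so $(x,y)$ is an optimistic profile with no pseudoterminal node in its support.

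The step I expect to need the most care is the last one: recognizing that a $0$-Nash equilibrium of the pseudogame can be reread as an optimistic profile by showing that each of its two strategies is simultaneously optimal in the $\alpha$-game and the $\beta$-game; the preceding inequality chain is routine once it is set up. I also want to be careful that ``support of a profile'' is understood to include the terminal (hence pseudoterminal) nodes reached with positive probability, matching the usage in \Cref{alg:zerosum}.
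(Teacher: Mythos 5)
Your proof is correct and follows essentially the same route as the paper's: both directions reduce to squeezing the chain $\alpha(x,y)\le\alpha^*\le\beta^*\le\beta(x,y)$ for the relevant profile, together with the observation that a profile avoids all pseudoterminal nodes exactly when $\alpha(x,y)=\beta(x,y)$. The paper states the ``if'' direction in contrapositive form and is terser about verifying that a $0$-equilibrium's components are optimal in the $\beta$- and $\alpha$-games, but the underlying argument is the same as yours.
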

The ``only if'' direction guarantees that Line~\ref{step:expand} does not terminate the algorithm unless a $0$-certificate has been found. The ``if'' direction guarantees a weak form of ``this algorithm will not waste work'': modulo the uniqueness of the optimistic profile\footnote{When the optimistic profile is not unique, the algorithm may waste work: for example, there may be one equilibrium which has support over pseudoterminal nodes and one which does not, the algorithm may pick the former and continue expanding nodes, making an unnecessarily big (but still correct) certificate.}, the algorithm stops exactly when it has found a $0$-certificate. This is not trivial: other protocols such as ``expand all pseudoterminal nodes appearing in the support of at least one player in the pessimistic equilibrium'' fail to satisfy the ``if'' direction.

The algorithm has no runtime bound as a function of the size of the smallest certificate of $G$, even assuming bounded branching factor: indeed, if $G$ is infinite, it is even possible for the algorithm to run indefinitely, even when a finite-sized certificate exists. One way to fix this without losing more than a constant factor in efficiency is to, in addition to Line~\ref{step:expand}, also always expand the shallowest strictly pseudoterminal node of $\tilde G$ at each iteration. This way, a certificate with $d$ nodes has depth at most $d$, and thus will be generated after at most after $O(b^d)$ expansions (where $b$ is a bound on the branching factor of the game), matching the lower bound of \Cref{prop:hard2}.

\subsection{Handling unbounded payoffs}

In infinite games with unbounded payoffs, it is possible for the games $(\tilde G, \alpha)$ and $(\tilde G, \beta)$ to have infinite-magnitude utility on some nodes. For example, $(\tilde G, \beta)$ may have payoff $+\infty$ on some nodes (but not $-\infty$). We now show how to adapt \Cref{alg:zerosum} for such situations. Assume WLOG that we are solving $(\tilde G, \beta)$; i.e. it is possible for payoffs to be $+\infty$ but not $-\infty$ (for $(\tilde G, \alpha)$, swap the players). Call a P2-sequence {\em bad} if its support (over terminal nodes) contains a node of utility $+\infty$. Assume that it is possible for P2 to avoid all bad sequences; otherwise, the game has value $+\infty$.  Consider the sequence-form bilinear saddle-point problem~\cite{Koller94:Fast} for $(\tilde G, \beta)$ (left) and its equivalent LP (right):
\begin{align}\label{eq:lp}
\max_{x \ge 0} \min_{y \ge 0}\ & x^T A y \text{~ s.t.~ } Bx = b, Cy = c, x, y \ge 0\qq{}\max_{x \ge 0,z} c^T z \text{~ s.t.~ } Bx = b, C^T z \le A^T x.
\end{align}
Here $A$ is the payoff matrix, which may contain infinite entries. 
Then, the main idea is to remove any constraint corresponding to bad P2-sequences, and solve the resulting LP (which now by construction contains no infinite entries and is thus well formed), for a Nash equilibrium solution $x$. The problem is that $x$ may not be a true Nash equilibrium of $(\tilde G, \beta)$, since it is possible for P1 to end up avoiding nodes of utility $+\infty$, which could allow P2 to best respond by actually playing toward a bad sequence. 

Let $V^*(s)$ denote the value that P2 receives by playing a best response to $x$ starting at a P2 infoset or sequence $s$. Let $V(s)$ denote the same, except while forcing P2 to avoid bad sequences. Obviously, $V^* \le V$. Consider the following recursive algorithm, which we run on every P2-root infoset $I$:

\begin{algorithm}[!htb]
\caption{{\sc correct}($I$): Correcting a strategy in the case of infinite reward}\label{alg:infinite-correction}
\begin{algorithmic}[1]
\FOR{each action $a$ available to P2}
    \IF{$V^*(Ia) < V(I)$}
        \STATE {\bf for} every P1-sequence $i$ such that $A_{i,Ia} = +\infty$ {\bf do}  $x_i \gets x_i +{}$infinitesimal\footnotemark
       \STATE {\bf for} every P2-infoset $I'$ whose parent sequence is $Ia$ {\bf do} {\sc correct}($I'$)
    \ENDIF
\ENDFOR
\end{algorithmic}
\end{algorithm}
\footnotetext{This can be easily formalized by perturbing by $\eps$, then taking $\eps$ sufficiently small. The strategy need not actually ever be constructed, so there is no need to formally discuss how small $\eps$ needs to be; if coding this algorithm, we can simply store the indices of infinitesimal entries.}

Call a pair of strategies  a {\em corrected optimistic profile} if it is the result of applying this procedure to both parts of an optimistic profile. We can now make the following strengthening of \Cref{prop:progress}:
\begin{theorem}\label{prop:infinite-progress}
    A pseudogame with possibly unbounded payoffs has a $0$-Nash equilibrium if and only if it has a {\em corrected} optimistic profile with no pseudoterminal node in its support. 
\end{theorem}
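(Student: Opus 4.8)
The plan is to mimic the proof of \Cref{prop:progress}, reducing the unbounded case to it once we know that the correction procedure of \Cref{alg:infinite-correction} turns the solution of the modified LP in \eqref{eq:lp} into a \emph{genuine} optimal strategy of the (possibly unbounded) zero-sum game it came from. Recall the setup for $(\tilde G,\beta)$: under the standing assumption that P2 can avoid every bad P2-sequence, the value $\beta^*$ is finite and equals the value of the ``restricted'' game in which P2 may not place probability on a bad sequence, and the modified LP (obtained by deleting the bad-sequence rows) computes an optimal P1 strategy $x$ of that restricted game; the situation for $(\tilde G,\alpha)$ is the mirror image, with the roles of the players swapped and $-\infty$ in place of $+\infty$.

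The heart of the argument is the following claim: writing $\hat x=\textsc{correct}(x)$, we have $\inf_y \beta(\hat x,y)=\beta^*$, where the infimum ranges over \emph{all} P2 strategies. I would prove this by induction down the game tree. Fix any P2 strategy $y$ that places positive probability on some bad sequence, and let $I,a$ be the first P2-infoset and action along which $y$ enters a bad sequence; when \textsc{correct} processed $I$ it must have found $V^*(Ia)<V(I)$ (this inequality is precisely the statement that entering a bad sequence at $Ia$ was attractive for P2), so $\hat x$ carries infinitesimal mass on a P1-sequence $i$ with $A_{i,Ia}=+\infty$, whence $\beta(\hat x,y)=+\infty>\beta^*$; the recursive calls are exactly what guarantee that such a blocking move has been inserted at \emph{every} depth, not only at the root infosets. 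Consequently P2's best response to $\hat x$ uses no bad sequence, so it is a best response within the restricted game, in which $x$ — and hence $\hat x$, since the two differ only by infinitesimals that do not affect any finite continuation value — is optimal and caps P1 at $\beta^*$. The infinitesimals are handled in the standard way (perturb the relevant $x_i$ by $\eps$, run the argument, let $\eps\to0$). One also notes that, by the mirror-image claim, the other component of the corrected optimistic profile best-responds \emph{away} from the sequences leading to those $+\infty$ (resp.\ $-\infty$) terminals, so the infinitesimal mass added by \textsc{correct} carries zero probability under the corrected profile; and since, by the sign conventions on the bounds ($\beta\in\R\cup\{+\infty\}$, $\alpha\in\R\cup\{-\infty\}$), every terminal with an infinite bound is a pseudoterminal, this also means that no infinite-bound terminal appears in the support of a corrected optimistic profile that has no pseudoterminal in its support.

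Granting the claim and its mirror image, the proof of \Cref{prop:progress} transcribes directly. Let $(\hat x^*,\hat y^*)$ be the corrected equilibrium of $(\tilde G,\beta)$ and $(\hat x_*,\hat y_*)$ that of $(\tilde G,\alpha)$, so $(\hat x^*,\hat y_*)$ is the corrected optimistic profile and $(\hat x_*,\hat y^*)$ the corrected pessimistic equilibrium; by the claim all four components are genuine optimal strategies of their respective games, and $\alpha^*\le\beta^*$ because $\alpha\le\beta$ pointwise. For ``if'', suppose $(\hat x^*,\hat y_*)$ reaches no pseudoterminal; then on every terminal reached with positive probability $\alpha$ and $\beta$ coincide, and by the last remark no infinite-bound terminal is reached, so $\beta(\hat x^*,\hat y_*)=\alpha(\hat x^*,\hat y_*)$ is finite, and the chain $\beta^*\le\beta(\hat x^*,\hat y_*)=\alpha(\hat x^*,\hat y_*)\le\alpha^*\le\beta^*$ forces $\alpha^*=\beta^*$; then $\beta^*(\hat y^*)-\alpha^*(\hat x_*)=\beta^*-\alpha^*=0$, so the corrected pessimistic equilibrium is a $0$-Nash equilibrium of the pseudogame in the sense of \Cref{def:zerosumnash}, hence of \Cref{def:pseudonash} by \Cref{prop:zerosum1}. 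For ``only if'', suppose the pseudogame has a $0$-Nash equilibrium $(\bar x,\bar y)$; combining the two best-response inequalities of \Cref{def:pseudonash} with $\alpha\le\beta$ gives, exactly as in the bounded case, that $\alpha^*=\beta^*=:v$ is finite, that $\bar x$ is optimal in both $(\tilde G,\alpha)$ and $(\tilde G,\beta)$ and $\bar y$ optimal in both, and that $(\bar x,\bar y)$ reaches no pseudoterminal. Since a genuinely optimal strategy guarantees at least $v$ against every opponent strategy, it is in particular an optimal solution of the corresponding modified LP and is fixed — up to harmless infinitesimals — by \textsc{correct}, so $(\bar x,\bar y)$ may be taken as a corrected optimistic profile, and it has no pseudoterminal in its support.

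The main obstacle is the claim of the second paragraph: making the induction for \textsc{correct} airtight, i.e.\ showing that after all the recursive calls \emph{no} P2 strategy can slip positive probability onto a bad sequence against $\hat x$ while keeping P1's value finite, together with a clean treatment of the infinitesimal perturbations. A secondary nuisance is the interplay between the correction mass and terminals carrying infinite bounds, which must be argued to contribute zero probability under the corrected optimistic profile so that they neither blow up its value nor count toward the pseudoterminals in its support.
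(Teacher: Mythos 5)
Your proposal follows essentially the same route as the paper: show that \textsc{correct} turns each optimistic strategy into a genuine (infinitesimally approximate) optimal strategy of its unbounded game so that the proof of \Cref{prop:progress} applies verbatim, and observe that a $0$-Nash equilibrium (i.e., the pessimistic strategies) is essentially untouched by the correction. The paper's own argument is in fact terser than yours; the one local fix your induction needs --- the first bad entry point $Ia$ need not satisfy $V^*(Ia)<V(I)$, in which case no blocking mass is added but entering there already yields P2 no improvement over $V(I)$, so the conclusion $\inf_y \beta(\hat x,y)=\beta^*$ survives via this dichotomy --- is exactly the kind of detail the paper also elides.
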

Thus, to run \Cref{alg:zerosum} in games with unbounded payoffs, it suffices to apply the correction algorithm to the optimistic profile found in Line~\ref{step:gamesolve} before expanding nodes.

\section{Experiments}
\begin{table*}[tb]
	\caption{Experimental results. 
	The {\em minimal certificate} is a certificate after removing all unnecessary nodes per \Cref{prop:eq-cert}. Percentages are relative to game size. Leduc variants have infinite size; for them, ``game size'' reported is for the trunk with the number of consecutive raises restricted to 12.}\label{table:experiments}
	\vskip 0.15in
	\scriptsize\centering
	\setlength{\tabcolsep}{3pt}
	\begin{tabular}{lrr|rrrr|rrrr}
		\toprule
		game & \multicolumn{2}{c|}{size of game} & \multicolumn{4}{c|}{size of certificate} & \multicolumn{4}{c}{size of minimal certificate}  \\
		& nodes & infosets  & \multicolumn{2}{c}{nodes} & \multicolumn{2}{c|}{infosets}  & \multicolumn{2}{c}{nodes} & \multicolumn{2}{c}{infosets} \\
		\midrule
search game  & 234,705 & 11,890 & 13,682 & 5.8\% & 532 & 4.5\% & 5,526 & 2.4\% & 379 & 3.2\% \\
\midrule
4-rank PI Goofspiel  & 2,229 & 1,653 & 275 & 12.3\% & 110 & 6.7\% & 141 & 6.3\% & 54 & 3.3\% \\
5-rank PI Goofspiel  & 55,731 & 41,331 & 2,593 & 4.7\% & 957 & 2.3\% & 763 & 1.4\% & 288 & 0.7\% \\
6-rank PI Goofspiel  & 2,006,323 & 1,487,923 & 21,948 & 1.1\% & 7,584 & 0.5\% & 4,438 & 0.2\% & 1,677 & 0.1\% \\
\midrule
4-rank Goofspiel  & 2,229 & 738 & 614 & 27.5\% & 117 & 15.9\% & 294 & 13.2\% & 58 & 7.9\% \\
5-rank Goofspiel  & 55,731 & 9,948 & 11,415 & 20.5\% & 2,160 & 21.7\% & 8,518 & 15.3\% & 1,792 & 18.0\% \\
6-rank Goofspiel  & 2,006,323 & 166,002 & 266,756 & 13.3\% & 15,776 & 9.5\% & 171,343 & 8.5\% & 12,135 & 7.3\% \\
\midrule
3-rank random Goofspiel  & 1,066 & 426 & 309 & 29.0\% & 92 & 21.6\% & 214 & 20.1\% & 65 & 15.3\% \\
4-rank random Goofspiel  & 68,245 & 17,432 & 16,416 & 24.1\% & 3,270 & 18.8\% & 11,992 & 17.6\% & 2,335 & 13.4\% \\
5-rank random Goofspiel  & 8,530,656 & 1,175,330 & 1,854,858 & 21.7\% & 241,985 & 20.6\% & 1,388,172 & 16.3\% & 185,946 & 15.8\% \\
\midrule
5-rank limit Leduc  & \it 197,736 & \it 13,920 & 26,306 & \it 13.3\% & 2,406 & \it 17.3\% & 12,923 & \it 6.5\% & 1,242 & \it 8.9\% \\
9-rank limit Leduc  & \it 1,181,512 & \it 44,928 & 137,662 & \it 11.7\% & 6,811 & \it 15.2\% & 51,533 & \it 4.4\% & 2,891 & \it 6.4\% \\
13-rank limit Leduc  & \it 3,578,472 & \it 93,600 & 337,312 & \it 9.4\% & 12,171 & \it 13.0\% & 105,769 & \it 3.0\% & 4,449 & \it 4.8\% \\
\bottomrule
	\end{tabular}
\end{table*}

We conducted experiments using the algorithm in \Cref{s:algorithm} on the following common zero-sum benchmark games. 
\begin{enumerate}
    \item A zero-sum variant of the {\bf search game}~\citet{Bosansky15:Sequence}. 
    \item {\bf $k$-rank Goofspiel}. It is played as follows. At time $t$ (for $t = 1, \dots, k$), players place bids for a prize of value $t$. The possible bids are the integers $1, \dots, k$, and each player must bid each integer exactly once. The player with the higher bid wins the prize; if the bids are equal, the prize is split equally. The winner of each round is made public after each round, but the bids are not. The goal of each player is to maximize the sum of the values of her prizes won. In the {\it perfect-information} (PI) variant, P2 knows P1's bid while bidding, and bids are made public after each round. This creates a perfect-information game in which P2 has a large advantage, and in which we expect a certificate of size $O(\sqrt{N})$. In the {\em random} variant, the order of the prizes is randomized. 
    \item {\bf $k$-rank limit Leduc poker}. It is a small variant of limit poker, played with one hole card and one community card, and a deck with $k$ ranks. The players are only allowed to raise by a fixed amount, but can do so an unlimited number of times. Thus, the possible payoffs in the game, and the length of the game, are both unbounded.
\end{enumerate}

We computed $0$-certificates in all cases. For the LP solver, we used Gurobi v9.0.0~\cite{19:Gurobi}.  Results of experiments can be found in \Cref{table:experiments}. In many games, we found $0$-certificates of size substantially smaller than the number of nodes in the game, and  the certificate size as a fraction of the game size decreases as the game grows. %

The results in Goofspiel align with the theoretical predictions: perfect-information games have very small certificates (basically $\sqrt{N}$ nodes). In light of \Cref{prop:eq-cert}, it also makes sense that certificates are smaller (relative to the size of the game) when there is no randomness: randomness simply increases the number of nodes in the game tree represented by any given pure strategy, so an equilibrium with the same sparsity for the players now leads to a larger certificate. 

In Leduc poker, no node involving more than 12 consecutive raises was ever expanded in any size of game while searching for a certificate. This suggests that it is never optimal for either player to play past this point, despite the fact that continuing to raise could in principle lead to an unbounded payoff. This phenomenon allows our algorithm to find a finite-sized $0$-certificate, thus completely solving the game in a reasonably efficient manner, even though it has infinite size.

\section{Conclusions and future research}
We presented a notion of certificate for general extensive-form games that allows verification of exact and approximate Nash equilibria without expanding the whole game tree. We showed that small equilibria exist in some restricted classes of extensive-form game, but not all. We presented algorithms for both verifying a certificate and computing the optimal certificate given the currently-explored trunk of a game. Our experiments showed that many large or even infinite games have small certificates, allowing us to find equilibria while exploring a vanishingly small portion of the game. 

This paper opens many directions for future research: 
\begin{enumerate}
    \item Develop further the ideas of \Cref{s:blackbox} for the case of unknown nature distributions. For example, what is the best way to balance sampling, game tree exploration, and equilibrium finding? 
    \item  Seek algorithms for finding certificates that give stronger guarantees of optimality than \Cref{prop:infinite-progress}, especially in the case of infinite games with unbounded utilities.
    \item Seek algorithms with stronger guarantees than that implied by \Cref{prop:eq-cert} for verifying the Nash gap of a given strategy profile; for example, is it possible to easily construct the smallest trunk for which a given $\sigma$ is an $\eps$-equilibrium?
\end{enumerate}
\section*{Broader Impacts}

The techniques have broad applicability. Furthermore, the paper opens up additional important research directions. 

Improving the strategic capabilities of people and companies will typically (but not always) improve systemwide good as the players will be able to better reach win-win solutions. In zero-sum games this is not the case because the size of the ``cake'' is constant, so there are winners and losers. In both the general case and the zero-sum case, AI tools like the ones in this paper can help elevate less educated and less experienced players up to the same level as expert players, thereby making the distribution of value more fair.

A potential downside is that \textit{if} the technology were only available to the privileged, that could increase unfairness.

\section*{Acknowledgements}

This material is based on work supported by the National Science Foundation under grants IIS-1718457, IIS-1617590, IIS-1901403, and CCF-1733556, and the ARO under awards W911NF1710082 and W911NF2010081.

\bibliographystyle{plain}
\bibliography{dairefs,main}

\newpage
\appendix
\section{Proofs}
\subsection{\Cref{prop:reasonable}}
\begin{align}
u^*_i(\sigma_{-i}) - u_i(\sigma) \le \beta^*_i(\sigma_{-i}) - \alpha_i(\sigma) \le \eps. \tag*{\qed}
\end{align}

\subsection{\Cref{prop:eq-cert}}

By definition, it is impossible to reach any pseudoterminal node of $\tilde G$ by changing only a single player's strategy. Thus, for any player $i$, we have $\beta^*_i(\sigma_{-i}) - \alpha(\sigma) \le u^*_i(\sigma_{-i}) - u(\sigma) \le \eps$. (the first inequality may not be an equality, because the best response $\beta^*_i(\sigma_{-i})$ is taken in the pseudogame, and $u_i^*$ is taken in the full game, where there is more flexibility. 
\qed

\subsection{\Cref{prop:counterexample}}
\begin{lemma}
	In every $\eps$-NE of $G$, the entropy of P1's strategy is at least $T(1 - 2\eps)$ bits.
\end{lemma}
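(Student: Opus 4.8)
The plan is to convert the equilibrium condition into a quantitative bound on how biased P1's per-round conditional distributions can be, and then read off the entropy bound via the chain rule for entropy together with a pointwise inequality for the binary entropy function.

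\textbf{Setup.} Since P1 never observes P2's moves and has perfect recall, P1's information sets in round $t$ are indexed exactly by her own past plays $a_{<t} = (a_1,\dots,a_{t-1})$, so a behavior strategy $\sigma_1$ is specified by the conditional probabilities $q_t(a_{<t}) := \sigma_1(a_t = 1 \mid a_{<t})$. This strategy induces a distribution $\mu$ over P1's action sequences in $\{0,1\}^T$, and by the chain rule $H(\mu) = \sum_{t=1}^T \E_{a_{<t}\sim\mu} h\bigl(q_t(a_{<t})\bigr)$, where $h$ denotes the binary entropy function in bits.

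\textbf{Bounding the bias.} The key observation is that P2's best response against $\sigma_1$ decomposes round by round: P1's total payoff is a sum of per-round matching-pennies payoffs and P1 never sees P2's actions, so P2 may simply best-respond (mismatch) within each round using her knowledge of $a_{<t}$. When P1's round-$t$ conditional has bias $|2q_t - 1|$, this earns P2 exactly $\frac1T|2q_t - 1|$ in expectation that round, so $u_2^*(\sigma_1) = \frac1T \sum_{t=1}^T \E_{a_{<t}\sim\mu} |2q_t(a_{<t}) - 1|$. The value of the game is $0$ (P1 playing uniformly at random makes every bias vanish and so cannot be exploited), so in any $\eps$-NE $\sigma = (\sigma_1,\sigma_2)$ of $G$ we have $u_1(\sigma) \ge u_1^*(\sigma_2) - \eps \ge u^* - \eps = -\eps$, hence $u_2(\sigma) \le \eps$ by zero-sum, and since P2 plays an $\eps$-best response, $u_2^*(\sigma_1) \le u_2(\sigma) + \eps \le 2\eps$. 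Therefore $\sum_{t=1}^T \E_{a_{<t}\sim\mu} |2q_t - 1| \le 2T\eps$.

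\textbf{From bias to entropy.} The binary entropy $h$ is concave with $h(0) = h(1) = 0$ and $h(1/2) = 1$, so it lies above its chords joining these three points; explicitly $h(q) \ge 1 - |2q - 1|$ for all $q \in [0,1]$. Substituting into the chain-rule identity, $H(\mu) \ge \sum_{t=1}^T \E_{a_{<t}}\bigl(1 - |2q_t - 1|\bigr) = T - \sum_{t=1}^T \E_{a_{<t}} |2q_t - 1| \ge T - 2T\eps = T(1-2\eps)$, which is the claim. The step requiring the most care is the bias bound: one must justify that P2's best response really does decompose round-by-round (so that $u_2^*(\sigma_1)$ has the stated closed form), and, crucially, one must track the constant correctly --- an $\eps$-NE only controls the gap between P2's realized payoff and her best-response value, so the factor of $2$ in $u_2^*(\sigma_1) \le 2\eps$ comes from separately invoking $u_1^*(\sigma_2) \ge u^* = 0$. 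This factor is exactly what produces $1 - 2\eps$ rather than $1 - \eps$ in the statement.
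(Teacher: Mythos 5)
Your proof is correct and is essentially the paper's argument in a cleaner package: the paper's induction on $T$ (establishing $H_T \ge 2U_T + T$) is precisely your chain-rule decomposition $H(\mu)=\sum_t \E\, h(q_t)$ combined with the same pointwise inequality $h(q)\ge 2\min(q,1-q)=1-\abs{2q-1}$ and the same round-by-round decomposition of P2's best response (valid because P1 never observes P2's moves), with both proofs reducing to the bound $\sum_t\E\abs{2q_t-1}\le 2T\eps$. One caveat worth noting: your factor of $2$ in that bound comes from pairing the general ``each player $\eps$-best-responds'' notion (which only gives exploitability of $\sigma_1$ at most $2\eps$) with a per-round zero-sum payoff of $\pm 1/T$ (spread $2/T$), whereas the paper implicitly pairs exploitability at most $\eps$ with a per-round spread of $1/T$; the two products happen to agree, but under the other pairing of these ambiguous conventions the same argument would only yield $T(1-4\eps)$.
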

\begin{proof}
	Let $\sigma_1$ be any P1 strategy in $\eps$-equilibrium, and let $H_T$ be the entropy over terminal nodes when P1 plays $\sigma_1$ and P2 plays uniformly at random. Let $U_T$ be the number of rounds that P2 loses if she best responds to P1. Since $\sigma_1$ is an $\eps$-NE strategy, we have $U_T \ge T(1/2 - \eps)$. We will show that $H_T \ge 2U_T + T$, which will complete the proof. 
	
	Proceed by induction on $T$. For $T = 1$, the claim follows from the inequality $
	h(p) \ge  2\min(p, 1-p)
	$,
	which is true for all $p \in [0, 1]$, where $h$ is the binary entropy function. 
	
	In the inductive case, suppose that, at the top information set, P1 plays strategy $x = [p, q]$ (i.e. heads with probability $p$, and tails with probability $q$. Let $H' \in \R^{2 \times 2}$ be the matrix whose $ij$-entry is the conditional entropy over terminal nodes after P1 plays $i$ and P2 plays $j$ in the root information set. Similarly, let $U'$ be the matrix of conditional remaining expected number of rounds lost, not including this round, for player 2. Note that the utility matrix of the overall game, assuming that P2 plays correctly in later rounds, is $A := U' + I$. By IH, $H' \ge 2U' + T - 1$ element-wise. Further, P2's move in this information set does not affect the future of the game, since P1 does not learn P2's move, and P2's move does not otherwise affect her future optimal decisions. That is, $U' y$ is the same for all (normalized) $y$. Let $y$ be the uniform random strategy for player 1, and $y^*$ be a best response for player 1. Then we have: 
	\begin{align}
	H &= 1 + h(p) + x^T H' y
	\\&\ge T + h(p) + 2 x^T U' y
	\\&= T + h(p) + 2 x^T U' y^*
	\\&= T + h(p) + 2 x^T A y^* - 2x^T y^*
	\\&= T + h(p) + 2 x^T A y^* - 2 \op{min}(p, 1-p)
	\end{align}
	and we are once again done by the inequality $h(p) \ge 2 \op{min}(p, 1-p)$.
\end{proof}
The restriction on P2's strategy is necessary: indeed, since P1 has only $2^T$ pure strategies, there are sparse $\eps$-NE strategies for P2 supported on only $O(T/\eps^2)$ pure strategies.

Somewhat surprisingly, this proposition becomes false if P1 learns what P2 played in each round. Indeed, the P1 strategy ``play heads if your number of losses minus number of wins is $\eps T$, and uniformly at random otherwise'' is (for large $T$) an $\eps$-equilibrium with basically $T$ bits of entropy, since if P2 plays uniformly at random, with very good probability their score delta will never exceed $\eps T$. However, despite having low entropy, this strategy has a very large support over terminal nodes.
\begin{corollary}
	In every $\eps$-NE of this game, for every $t \ge T/2$, the first $t$ rounds of P1's strategy have at least $t(1 - 4\eps)$ bits of entropy.
\end{corollary}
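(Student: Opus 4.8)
The plan is to derive this corollary from the preceding lemma by a short application of the chain rule for Shannon entropy. Write $A = (A_1, \dots, A_T)$ for the random sequence of actions P1 actually plays when following her $\eps$-NE strategy $\sigma_1$; since P1 never observes any of P2's moves, the law of $A$ — and of every prefix $A_{\le t} := (A_1, \dots, A_t)$ — is a function of $\sigma_1$ alone. In these terms, ``the entropy of P1's strategy'' appearing in the lemma is exactly $H(A)$, and ``the first $t$ rounds of P1's strategy'' is $A_{\le t}$, so the goal is to show $H(A_{\le t}) \ge t(1 - 4\eps)$.

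First I would invoke the lemma to obtain $H(A) \ge T(1 - 2\eps)$. Writing $A_{>t} := (A_{t+1}, \dots, A_T)$ and applying the chain rule $H(A) = H(A_{\le t}) + H(A_{>t} \mid A_{\le t})$, I then bound the conditional term by $H(A_{>t} \mid A_{\le t}) \le H(A_{>t}) \le T - t$, since $A_{>t}$ ranges over at most $2^{T-t}$ values. Hence
\begin{equation}
H(A_{\le t}) \ge H(A) - (T - t) \ge T(1 - 2\eps) - (T - t) = t - 2\eps T.
\end{equation}
Finally, the hypothesis $t \ge T/2$ gives $T \le 2t$, so $2\eps T \le 4\eps t$ and $H(A_{\le t}) \ge t - 4\eps t = t(1 - 4\eps)$, completing the proof.

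I do not anticipate any real obstacle: all the work is in the lemma, and the only two points needing (minor) care are (i) noting that the no-feedback structure of the game makes the round-$\le t$ marginal of P1's play coincide with the truncated strategy, so that $H(A_{\le t})$ is indeed the quantity of interest, and (ii) the elementary step $2\eps T \le 4\eps t$ that uses $t \ge T/2$. It is worth observing why the constant worsens from $1 - 2\eps$ to $1 - 4\eps$: the chain rule charges the full $(T-t)$-bit slack of the last rounds against $A_{\le t}$, leaving a deficit of $2\eps T$ bits, and re-expressing this deficit relative to $t$ rather than $T$ costs a factor of $2$ precisely when $t$ is as small as $T/2$.
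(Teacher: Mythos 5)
Your proof is correct, and it is the natural (evidently intended) derivation of this corollary, which the paper states without proof immediately after the lemma: the chain rule $H(A)=H(A_{\le t})+H(A_{>t}\mid A_{\le t})$ with the trivial bound $H(A_{>t}\mid A_{\le t})\le T-t$, followed by $2\eps T\le 4\eps t$ from $t\ge T/2$, reproduces the stated constant exactly. Your side remarks---that P1's lack of feedback makes $H(A_{\le t})$ depend only on $\sigma_1$, and that the factor-of-two loss in the constant is tight at $t=T/2$---are both accurate.
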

\begin{corollary}
	Let $\eps \le 1/16$. In every $\eps$-NE of this game, for every $t \ge T/2$, P1's strategy assigns probability at least $2^{-t}$ to at least half of her pure strategies at round $t$.
\end{corollary}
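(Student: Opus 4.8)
The plan is to obtain this from the preceding corollary by combining it with a ``spreading'' property of distributions whose entropy is close to maximal. Fix $t \ge T/2$. Since P1 observes nothing during play, her behavior in rounds $1,\dots,t$ is equivalent to a single draw from a distribution $\mu$ on the $M := 2^t$ length-$t$ action strings, and the preceding corollary says $H(\mu) \ge t(1-4\eps)$. Substituting $\eps \le 1/16$ turns this into $H(\mu) \ge \tfrac34 t = \tfrac34 \log M$. The conclusion to be shown is exactly that the heavy set $A := \{s : \mu(s) \ge 1/M\}$ has $|A| \ge M/2$.

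I would attack this by contradiction: assume $|A| < M/2$, so the light set $B := A^c$ has $|B| > M/2$ and every atom of $B$ has mass below $1/M$. The first move is the grouping identity $H(\mu) = H_2(\mu(A)) + \mu(A)\,H(\mu \mid A) + \mu(B)\,H(\mu \mid B)$, with $H_2 \le 1$, $H(\mu \mid A) \le \log|A|$, and $H(\mu \mid B) \le \log|B|$; the second is to feed in the structural constraints $|A| < M/2$, $|B| \le M$, and $\mu(B) < |B|/M$, and squeeze $H(\mu)$ below $\tfrac34 \log M$, contradicting the first paragraph.

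The main obstacle is precisely this squeezing: the crude conditional-entropy bounds above only give $H(\mu) \le 1 + t - \mu(A) \le 1 + t$, which is useless, because a near-uniform distribution can have entropy arbitrarily close to $\log M$ while keeping almost every atom a hair below $1/M$. So the real work is to extract more from the hypothesis than its face value --- e.g.\ by using that the preceding corollary holds simultaneously for every $t' \in [T/2,T]$ (forcing each of P1's per-round conditional entropies to be near-maximal, not just their average), or by exploiting the product form $\mu(s) = \prod_{k \le t} \sigma_1(s_k \mid s_{<k})$ of a behavior strategy, or by appealing directly to the game's structure rather than to entropy alone. Identifying and carrying out whichever of these refinements makes the ``$\tfrac34$ of the maximum'' input pinch against the ``at least half'' conclusion is the crux; the bookkeeping around it is routine.
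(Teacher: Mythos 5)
Your proposal does not prove the statement; it stops at the decisive step. You correctly reduce the claim to a property of the marginal $\mu$ of P1's play over the $2^t$ length-$t$ action strings, import $H(\mu) \ge t(1-4\eps) \ge \tfrac34 t$ from the preceding corollary, and set up the grouping identity over the event ``$Z$ lies in the least-likely half'' --- which is exactly the decomposition the paper's own proof uses ($H(Z)=H(Z,E)=H(E)+\Pr[E]H(Z\mid E)+\Pr[\neg E]H(Z\mid \neg E)$, followed by ``solve for $p$''). But the conclusion, that at least $2^{t-1}$ strings receive mass at least $2^{-t}$, is precisely what your last paragraph leaves open: you list three candidate refinements (using all $t'$ simultaneously, using the product structure of the behavior strategy, using the game itself) and carry out none of them. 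A writeup that ends by naming the crux and deferring it is a gap, not a proof.

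That said, your diagnosis of why the naive squeeze fails is correct and worth taking seriously: a distribution on $2^t$ atoms can have entropy arbitrarily close to $t$ while only a single atom has mass at least $2^{-t}$ (put mass $(1-\delta)2^{-t}$ on $2^t-1$ atoms and the remainder on the last one), so no bound of the form $H(\mu)\ge(1-\gamma)t$ can by itself force half the atoms to clear the exact threshold $2^{-t}$. This means the missing step is not routine bookkeeping; whatever closes the argument must use more than the entropy lower bound, either by relaxing the threshold (e.g.\ to $2^{-t-k}$ for suitable slack $k$, with the downstream constants in \Cref{prop:counterexample} adjusted) or by exploiting structure of $\mu$ that your proposal only gestures at. For comparison, the paper asserts the single inequality $\Pr[E]H(Z\mid E)+\Pr[\neg E]H(Z\mid\neg E)+H(E)\le \frac{2^tp}{2}\cdot\frac{t}{2}+\frac{t}{2}$; the crude bounds $H(Z\mid E),H(Z\mid\neg E)\le t-1$ and $H(E)\le 1$ only give $H(Z)\le t$, so that display is not justified by the stated facts either, and your counterexample shows it cannot be repaired by entropy considerations alone. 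In short: you matched the paper's approach up to the point where the paper asserts the key inequality, you (rightly) declined to assert it, and you did not supply a substitute.
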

\begin{proof}
	Let $Z$ be a random variable for P1's selected strategy, and $E$ be the event that $Z$ is among the half least likely pure strategies to be picked.
	\begin{align}
	H(Z) = H(Z, E) &= \Pr[E] H(Z|E) + \Pr[\neg E] H(Z | \neg E) + H(E)
	\le \frac{2^tp}{2} \frac{t}{2} + \frac{t}{2}
	\end{align}
	where $H$ is the entropy. We know from above that $H(Z) \ge t(1 - 4\eps)$, so the claim follows by solving for $p$.
\end{proof}

We now prove \Cref{prop:counterexample}. The proof acts like a partial converse to \Cref{prop:eq-cert} for this game. Let $((\tilde G, \alpha, \beta), \sigma)$ be an $\eps$-certificate, and let $Z'$ be the set of terminal nodes in $\tilde G$. 
Let $u$ be the assignment of utilities induced by P2 playing uniform random at every decision point outside $\tilde G$ (it does not matter at this point how P1 plays). Let $\sigma_i'$ be the uniform random strategy for player $i$. Then:
\begin{align}
\beta_2(\sigma_1, \sigma_2') \le \beta_2^*(\sigma_1) \le u_2(\sigma) + \eps \le u_2(\sigma_1', \sigma_2) + 2\eps
= u_2(\sigma_1, \sigma_2') + 2\eps.
\label{eq:contradiction}
\end{align}

For simplicity of notation, for any terminal node $z$ of $\tilde G$, let $r(z)$ be the number of rounds remaining in the game. Then note that $\beta(z) - u(z) = r(z)/2T$ for every $z$.
Now suppose for contradiction that $\tilde G$ has fewer than $n := 2^{2T(1 - 16\eps)-2}$ terminal nodes. Consider the level of the game tree after both players have made $t := (1-16\eps)T$ moves; in other words, the level at which $r(z) = 16\eps T$. This level has $4n$ nodes, so certainly $\tilde G$ must contain at most $1/4$ of the nodes at this level. Let $S$ be a set of half of the nodes of $G$ at level $t$ to which P1 assigns probability at least $2^{-t}$. Then $\tilde G$ contains at most half the nodes in $S$. Now observe that
\begin{align}
\beta_2(\sigma_1, \sigma_2') - u_2(\sigma_1, \sigma_2^*) 
&= \frac{1}{2T} \E_{z} r(z) \\&\ge \frac{1}{2T} \sum_{z \in S \setminus \tilde G} \sigma_1(z) \sigma_2^*(z) r(z)
\\&\ge \frac{1}{2T} \frac{1}{2} 2^{2t} 2^{-t} 2^{-t} r(z) = 4\eps
\end{align}
which contradicts~\eqref{eq:contradiction}. \qed
\subsection{\Cref{prop:low-info}}
We first introduce some terminology that will be useful in this section. The {\em realization plan} corresponding to a strategy $\sigma_i$ is the vector of reach probabilities $\sigma_i(s)$ for each {\em sequence} $s$ for player $i$. The constraints on valid realization plans are linear, and the payoff of a two-player zero-sum game can be expressed as a bilinear form $x^T A y$, where $x$ and $y$ are the realization plan vectors for the two players, and $A$ is a payoff matrix depending only on the terminal node values~\cite{Koller94:Fast}. This bilinear program is known as the {\em sequence form} of a game.

\begin{lemma}\label{lem:lipton}
	Let $x$ be any P1 strategy. Let $\hat x$ be a strategy profile defined by mixing uniformly at random over a multiset of $k$ independent sampled pure strategies from $x$, where
	\begin{align}
	k \ge \frac{D^2}{2\eps^2} \log \frac{2N}{\delta}.
	\end{align}
	and $D$ is the maximum support size over terminal sequences of any P2 pure strategy. Then with probability $1 - \delta$, for any strategy profile $y$, we have $\abs{u_2(\hat x, y) - u_2(x, y)} \le \eps$.
\end{lemma}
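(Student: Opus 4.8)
The plan is to run the sampling argument of Lipton--Markakis--Mehta, but carried out \emph{sequence by sequence} in the sequence form so that a union bound over only $N$ events suffices. Work in the sequence form, so that $u_2(x,y) = x^T A y$ with $x,y$ the realization-plan vectors of P1 and P2 and $A$ indexed by terminal sequences. Since $u_2(\hat x, y)$ is linear in $y$'s realization plan, and every (mixed) realization plan is a convex combination of pure ones, it suffices to prove $\abs{u_2(\hat x, y) - u_2(x,y)} \le \eps$ for every \emph{pure} P2 strategy $y$; the general case (arbitrary $y$, or arbitrary profiles) follows by taking convex combinations. Also normalize WLOG so that $u_2$ takes values in $[0,1]$ (a general payoff range only rescales $\eps$). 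Finally, observe that $\hat x$'s realization plan is exactly the empirical average $\frac{1}{k}\sum_{j=1}^k \rho(x^{(j)})$ of the $k$ i.i.d.\ pure realization plans, and, using perfect recall (so that the action choices recorded along any one sequence sit at distinct infosets and are hence independent), $\E \rho(x^{(j)})$ equals the realization plan of $x$.

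Next I would set up the per-sequence decomposition. For a terminal P2-sequence $s_2$, let $f_{s_2}(x) := (x^T A)_{s_2}$, i.e., the expected payoff to P2 when P1 plays $x$, nature plays its distribution, and P2 plays to reach $s_2$. Two facts drive everything. First, for a pure P2 strategy $y$ we have $u_2(x,y) = \sum_{s_2 : y(s_2)=1} f_{s_2}(x)$, a sum of at most $D$ terms by the definition of $D$. Second, $0 \le f_{s_2}(x) \le 1$ for every realization plan $x$ (in particular every pure one), because $f_{s_2}(x)$ is the reach probability of $s_2$ under $(x,\text{to-}s_2,\text{nature})$ — a genuine probability, since once both players' pure plays are fixed the only randomness left is nature's, and the leaves lying under $s_2$ are disjoint — multiplied by payoffs in $[0,1]$. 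By linearity of $f_{s_2}$, $f_{s_2}(\hat x) = \frac{1}{k}\sum_{j=1}^k f_{s_2}(x^{(j)})$ is the empirical mean of $k$ i.i.d.\ $[0,1]$-valued random variables with mean $f_{s_2}(x)$.

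Then the estimate is routine. Apply Hoeffding's inequality to get, for each fixed terminal P2-sequence $s_2$, that $\Pr[\abs{f_{s_2}(\hat x) - f_{s_2}(x)} > \eps/D] \le 2\exp(-2k\eps^2/D^2)$; union-bounding over the at most $N$ terminal P2-sequences and using $k \ge (D^2/2\eps^2)\log(2N/\delta)$ makes the total failure probability at most $\delta$. On the complementary event, for every pure P2 strategy $y$ the triangle inequality over the at most $D$ sequences in the support of $y$ yields $\abs{u_2(\hat x,y) - u_2(x,y)} \le D\cdot(\eps/D) = \eps$, and convexity in $y$ extends this to all P2 strategies, completing the proof.

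The step I expect to be the crux is the decomposition in the second paragraph. The naive approach would union-bound concentration directly over P2's pure strategies, of which there can be exponentially many, producing a log-factor linear in the game size rather than $\log N$; recognizing that $u_2(\cdot,y)$ restricted to a fixed pure $y$ is a sum of only $D$ sequence-indexed quantities $f_{s_2}$, each itself a bounded empirical average over the samples, is exactly what lets the union bound range over the $\le N$ terminal sequences and yields the stated $D^2$ dependence. The one place that needs care is verifying $0 \le f_{s_2} \le 1$ when nature is present — i.e.\ that it really is a reach probability times bounded payoffs, using that distinct leaves under $s_2$ are disjoint nature events.
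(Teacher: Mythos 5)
Your proposal is correct and follows essentially the same route as the paper's proof: Hoeffding applied entrywise to the sequence-form vector $x^T A$ (your $f_{s_2}$), a union bound over the at most $N$ terminal P2-sequences, and then the bound $\abs{u_2(\hat x,y)-u_2(x,y)}\le D\cdot(\eps/D)$ for pure $y$ — which is exactly the paper's H\"older step $\norm{y}_1\norm{A\hat x - Ax}_\infty$ with $\norm{y}_1\le D$ — followed by convexity in $y$. The extra care you take in verifying $0\le f_{s_2}\le 1$ and the unbiasedness of the sampled realization plan is implicit in the paper but correct and worth making explicit.
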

\begin{proof}
	
	We follow basically the same idea as the proof in~\citet{Lipton03:Playing}. Let $A$ be the P2 sequence-form payoff matrix, restricted to those rows and columns corresponding to terminal sequences. By Hoeffding, we have 
	\begin{align}
	\Pr[\abs{(A\hat x)_i - (Ax)_i} \ge \frac{\eps}{D}] \le 2 e^{-2k\eps^2/D^2} \le \frac{\delta}{N}
	\end{align}
	by picking $k$ as above. Taking a union bound over the at most $N$ sequences for P2, we have $\norm{A\hat x - Ax}_\infty \le \eps/D$ with probability $1 - \delta$. Now select an $x'$ for which this is true. Then by H\"older's inequality, for any pure realization plan $y$, we have 
	\begin{align}
	\abs{y^T A\hat x - y^T Ax} \le \norm{y}_1 \norm{A\hat x - Ax}_\infty \le \eps.
	\end{align}
	where the last inequality follows because $\norm{y}_1 \le D$. Now since $\abs{y^T A\hat x - y^T Ax}$ is convex in $y$, and the pure realization plans are the vertices of the polytope of all realization plans, we are done.
\end{proof}

\Cref{prop:low-info} now follows by applying the lemma to an equilibrium strategy $x$ with any $\delta < 1$. \qed
\subsection{\Cref{prop:sample-complexity}}
Sampling this number of samples at each nature node $h$ is at least as good as sampling $(D^2/2\eps^2) \log(2N/\delta)$ pure nature strategies. The proposition now follows by applying \Cref{lem:lipton} to the game in which the game tree is the same as $G$, P1 is nature, P2 controls every actual player in $G$ (and thus has perfect information), and the P2 utility function is $u$. \qed

\subsection{\Cref{prop:pseudogame-estimation}}
By a union bound over the $\abs{\mc P}$ players and the two utility functions $\alpha_i$ and $\beta_i$ for each player, we have that with probability at least $1 - 2\delta\abs{P}$, for every $i$ and every deviation $\sigma_i'$, $\abs{\hat \alpha_i(\sigma_i', \sigma_{-i}) - \alpha_i(\sigma_i', \sigma_{-i})} \le \eps$ and$\abs{\hat \beta_i(\sigma_i', \sigma_{-i}) - \beta_i(\sigma_i', \sigma_{-i})} \le \eps$.

Let $\hat \alpha_i(\sigma)$ and $\hat \beta_i(\sigma)$ for a given strategy $\sigma$ be the utilities of $\sigma$ under the approximated version of $\tilde G$. Let $\hat \sigma_i^*$ be a best response for player $i$ in the approximated version of $\tilde G$, and let $\sigma_i^*$ be a best response in $\tilde G$ itself. Then we have:
\begin{align}
    \beta^*_i(\sigma_{-i}) \le \hat \beta_i(\sigma_i^*, \sigma_{-i}) + \eps \le \hat \beta_i^*(\sigma_{-i}) + \eps \le \hat \alpha(\sigma) + \eps + \eps' \le \alpha(\sigma) + 2\eps + \eps'
\end{align}
for every player $i$. \qed

\subsection{\Cref{prop:zerosum1}}

Let $(x, y)$ be an $\eps$-NE in the sense of \Cref{def:zerosumnash}. Then
\begin{align}
    \beta^*(y) - \alpha(x, y) \le \beta^*(y) - \alpha^*(x) \le \eps \qq{and} 
\beta(x, y) - \alpha^*(x) \le \beta^*(y) - \alpha^*(x) \le \eps. \tag*{\qed}\end{align}
\subsection{\Cref{prop:zerosum2}}
Let $(x, y)$ be an $\eps$-NE in the sense of \Cref{def:pseudonash}.  Then
\begin{align}
    \beta^*(y) - \alpha^*(x) \le \beta^*(y) - \alpha(x, y) + \beta(x, y) - \alpha^*(x) \le 2 \eps. \tag*{\qed}
\end{align}
\subsection{\Cref{prop:hard1}}\label{proof:hard1}

We reduce from the SET-COVER problem, which is known to be NP-hard to better than a $\Theta(\log n)$ factor~\cite{Raz97:Subconstant}. In SET-COVER, we are given a universe $U = \{ 1, \dots, n\}$ and a collection of $m$ sets $\mc S = \{ S_1, \dots, S_m \}$ whose union is $U$, and our task is to find the smallest subset of $\mc S$ whose union is still $U$. 

Consider the following game: P2 starts by choosing to either {\em play} or {\em leave}. If P2 leaves, then the game immediately terminates, and P1 gets value $1/2m$. If P2 chooses to play, then P1 chooses an index $i = 1, \dots, m$. Then, P1 is given $m$ consecutive opportunities to leave the game (and immediately lose), should they choose. (The sole purpose of this is to inflate the size of the certificate.) After this, P2, without knowing the $i$, chooses an element $u \in U$. P1 gets value $1$ if $u \in S_i$, and $0$ otherwise.

This game has $\op{poly}(m, n)$ nodes, and its value (for P1) is exactly $1/2m$, since P1 can force P2 to leave by playing uniformly at random (and not choosing to lose). We now claim that, for $\eps < 1/2m$, finding an $\eps$-certificate of size $\Theta((m+n)k)$ is equivalent to finding a set cover of size $k$, which completes the proof. 

If $\mc R \subseteq \mc S$ is a set cover of size $k$, then consider the trunk created by expanding exactly those P2 decision nodes where P1 has played some set $S_i \in \mc R$. This creates a trunk of size $\Theta((m+n)k)$. Even pessimistically, P1 can gain value $1/k \ge 1/m$ by randomizing uniformly over $\mc R$ in this trunk; thus, P2 is forced to leave, and this is a $0$-certificate.

Conversely, suppose we had an $\eps$-certificate, for $\eps < 1/2m$, constructed from some tree $\tilde G$. Let $\mc R$ be the collection of sets $S_i \in \mc S$ for which P2's decision node after P1 plays $S_i$ has been expanded, and let $k = |\mc R|$. Then the trunk has size at least $\Omega((m+n)k)$. If $\mc R$ is not a set cover, then there is some $u \in U$ outside the union of sets in $\mc R$. If P1 plays $u$, then she gains optimistic value $0$. Thus, since $\eps < 1/2m$, $\mc R$ must be a set cover.
\qed
\subsection{\Cref{prop:hard2}}

Consider the family of two-player games in which there is a target string $x \in \{0, 1\}^n$, and play proceeds as follows: Player 1 chooses, bit-by-bit, a string $y \in \{0, 1\}^n$. If $x = y$, then Player 1 wins; otherwise, Player 2 chooses whether to win or lose. The smallest certificate in this game has size $\Theta(n)$, and consists of the path of play to $y$. However, there is no algorithm, randomized or deterministic, that will find the correct node $y$ without first expanding $\Omega(2^n)$ other nodes. 
\qed
\subsection{\Cref{prop:progress}}\label{proof:progress}
$(\Leftarrow)$ Suppose $\tilde G$ has no $0$-certificate. Let $(x^*, y_*)$ be an optimistic profile. Then
\begin{align}
\alpha(x^*, y_*) \le \alpha^*(y_*) < \beta^*(x^*) \le \beta(x^*, y_*).
\end{align}
where the middle inequality is strict since $\tilde G$ has no $0$-certificate, But then $\alpha(x^*, y_*) \ne \beta(x^*, y_*)$; i.e., there is some uncertainty as to the value of the strategy profile $(x^*, y_*)$; i.e., there is a nonzero probability that a pseudoterminal node is reached.

$(\Rightarrow)$ Now suppose $\tilde G$ has a $0$-certificate, and call it $(x_*, y^*)$. Clearly, $(x_*, y^*)$ cannot contain in its support any pseudoterminal node. We claim that $(x_*, y^*)$ is also an optimistic profile of $\tilde G$, which completes the proof. Indeed, we have 
\begin{align}
\alpha^*(x_*) \le \beta^*(x_*) \le \beta^*(y^*) \qq{and} \alpha^*(x_*) \le \alpha^*(x^*) \le \beta^*(y^*)
\end{align}
But all of these must actually be equalities, since $\alpha^*(x_*) = \beta^*(y^*)$ for a $0$-certificate. Thus, $x_*$ is a Nash equilibrium strategy in $(\tilde G, \beta)$, and $y^*$ is a Nash equilibrium strategy in $(\tilde G, \alpha)$, which is what we needed to show.
\qed

\subsection{\Cref{prop:infinite-progress}}
$(\Leftarrow)$ The correction algorithm adds infinitesimal amounts to sequences such that P2 is then forced to never play to any bad sequence that could be used to achieve value better than $V(I)$. Thus, corrected equilibrium is actually an $\eps$-equilibrium for infinitesimal $\eps$, and the proof of \Cref{proof:progress} applies verbatim.

$(\Rightarrow)$ A pessimistic strategy will never be corrected, since a pessimistic player never has a terminal node of utility $+\infty$. Thus, again, the proof of \Cref{proof:progress} applies verbatim. \qed

\end{document}